\definecolor{Red}{rgb}{1,0,0}
\definecolor{Blue}{rgb}{0,0,1}
\definecolor{Olive}{rgb}{0.41,0.55,0.13}
\definecolor{Green}{rgb}{0,1,0}
\definecolor{MGreen}{rgb}{0,0.8,0}
\definecolor{DGreen}{rgb}{0,0.55,0}
\definecolor{Yellow}{rgb}{1,1,0}
\definecolor{Cyan}{rgb}{0,1,1}
\definecolor{Magenta}{rgb}{1,0,1}
\definecolor{Orange}{rgb}{1,.5,0}
\definecolor{Violet}{rgb}{.5,0,.5}
\definecolor{Purple}{rgb}{.75,0,.25}
\definecolor{Brown}{rgb}{.75,.5,.25}
\definecolor{Grey}{rgb}{.5,.5,.5}
\definecolor{Black}{rgb}{0,0,0}
\title{Long-branch attraction in species tree estimation: inconsistency of partitioned likelihood and topology-based summary methods}
\author{
	S\'ebastien Roch\footnote{Department of Mathematics, University of Wisconsin--Madison, 480 Lincoln Dr, Madison WI 53706} 
	\and
	Michael Nute\footnote{Department of Statistics, The University of Illinois at Urbana-Champaign, 725 S Wright St \#101, Champaign IL 61820}
	\and 
	Tandy Warnow\footnote{Department of Computer Science, The University of Illinois at Urbana-Champaign, 201 North Goodwin Avenue,
		Urbana IL 61801-2302} 
}
\date{\today}
\newtheorem{thm}{Theorem}
\newtheorem{cor}{Corollary}
\newtheorem{lem}{Lemma}
\newtheorem{claim}{Claim}
\newcommand{\tree}{\mathcal{T}}
\newcommand{\E}{\mathbf{E}}
\renewcommand{\P}{\mathbf{P}}
\newcommand{\ta}{\phi_a}
\newcommand{\tb}{\phi_b}
\newcommand{\stree}{\mathcal{S}}
\definecolor{Red}{rgb}{1,0,0}
\definecolor{Blue}{rgb}{0,0,1}
\definecolor{Grey}{rgb}{0.5,0.5,0.5}
\renewcommand{\patterns}{\mathscr{X}}
\newcommand{\patsat}{\patterns^{\mathrm{s}}}
\newcommand{\qsat}{Q^{\mathrm{s}}}
\newcommand{\bigo}{\mathcal{O}}
\begin{document}	
\maketitle

\begin{abstract}
With  advances in sequencing technologies, there are
now massive am\-ounts of genomic data from across all life,
leading to the possibility that a robust Tree of Life can be
constructed.
However, ``gene tree heterogeneity", which is when
different genomic regions can evolve differently, is a
common phenomenon in multi-locus datasets, and reduces the
accuracy of standard
methods for species tree estimation
that do not take this heterogeneity into account.
New methods have been developed for species tree estimation
that specifically address gene tree heterogeneity,
and that have been proven to converge to the true species tree
when the number of loci and number of sites per locus both increase (i.e., the methods are said to be ``statistically consistent").  
Yet, little is known about the biologically realistic condition
where the number of sites per locus is bounded.
We show that when the sequence length of each locus is bounded (by any arbitrarily chosen value), the most common approaches to species tree estimation that take heterogeneity into account (i.e., traditional fully partitioned concatenated maximum likelihood and newer approaches, called summary methods, that estimate the species tree by combining gene trees)  
are not statistically consistent,   {\em even when} the heterogeneity is extremely constrained.
The main challenge is the presence of conditions such as long branch attraction that create biased tree estimation when the number of sites is restricted. 
Hence, our study uncovers a fundamental challenge to species tree estimation using both traditional and new methods.
\end{abstract}

\section*{Introduction}

Species trees are a key aspect of much biological research, including the detection of co-evolution, the inference of the ancestral traits, and the dating of speciation events \cite{Posada2016}.
The  availability of sequence data collected from diverse species representing a broad spectrum of life  has led to the expectation that the construction of a robust Tree of Life should be possible using statistical estimation methods,  such as maximum likelihood.
These estimations are  increasingly based on large numbers of loci (sometimes thousands) selected from across the genomes of different species \cite{Meredith2011,jarvis-2014b,Misof-2014,wickett2014,Cannon2016,DavidMaddison2016}.

By and large, however, the methods used for species tree estimation have been designed for  gene tree estimation, which is a simpler statistical estimation problem.
For gene tree estimation, the assumption is that the input sequences have all evolved down a single model tree (called the ``gene tree'') under a sequence evolution model, such as
Cavender-Farris-Neyman \cite{Cavender1978,Farris1973,Neyman1971},
Jukes-Cantor \cite{Jukes1969},
or the Generalised Time Reversible (GTR) model \cite{tavare-gtr}.
The estimation of the gene tree under these models from the aligned sequence data 
is a well-studied problem, and many statistically
consistent methods have been developed under
these models \cite{semple2003phylogenetics}.
Species tree estimation is much more complex,
since gene trees can differ from the species tree
due to multiple causes, including
incomplete lineage sorting (ILS), as modelled
by the multi-species coalescent (MSC) model \cite{Maddison1997a}.
Indeed, many recent phylogenetic analyses of genome-scale biological datasets for birds \cite{jarvis-2014b}, land plants \cite{wickett2014}, worms \cite{Cannon2016}, and other organisms, have revealed substantial heterogeneity across the genes that is consistent with ILS.

The  construction of the species tree when there is gene tree heterogeneity due to ILS can be seen as a statistical estimation problem under a two-phase model of sequence evolution where gene trees evolve within a species tree under the MSC model, and then gene sequences evolve down each gene tree under a sequence evolution model.
For example, 
under the MSC+JC model where
true gene trees evolve within the species tree under
the MSC model and 
gene sequences evolve down the gene trees under
the Jukes-Cantor (JC) model,  
the estimation of species trees from
gene sequence data needs to use
the properties 
of the evolutionary models in order to be
statistically consistent.
One such approach for species tree estimation is to
estimate gene trees for each locus,
and then combine these gene trees into a species tree using
a coalescent-based summary method (that 
takes gene tree incongruence due to ILS into account); such
approaches can be
proven to converge in probability to the
true species tree as the number of genes and number of
sites per gene both increase.
Thus, for example, statistically consistent 
species tree estimation is possible under the MSC+JC model
when gene trees are estimated using Jukes-Cantor maximum likelihood 
and then combined into a species tree using
an appropriate coalescent-based summary method.
Examples of these summary methods that enable statistically
consistent species tree estimation include
MP-EST \cite{Liu2010a}, NJst \cite{njst}, ASTRID
\cite{astrid}, ASTRAL \cite{astral,astral-2},
STEM \cite{stem}, STEAC \cite{star}, STAR \cite{star},
and
GLASS \cite{Mossel2010}.

In contrast, 
many species trees are estimated using ``unpartitioned maximum
likelihood", where 
the gene sequence alignments are 
concatenated into a single supermatrix,  and a tree
is then estimated on that supermatrix under the assumption
that 
all the sites evolve under the same model tree. 
As shown by \cite{RochSteel2015}, this
approach is not statistically consistent and
can even be positively misleading in the presence of
gene tree  heterogeneity due to ILS.

Although unpartitioned concatenated analysis with maximum likelihood (CA-ML) 
is known to be statistically inconsistent and coalescent-based
species tree methods can be statistically consistent,
performance in practice (and in particular on 
simulated datasets) has been mixed, with 
CA-ML sometimes more accurate than leading summary methods \cite{leache2010accuracy, patel2013,Mirarab2014, WSB, chou-recombcg,molloy2017}.
One of the challenges to using summary methods is
gene tree estimation error, resulting in part from
limited sequence lengths per gene \cite{naive-binning}. 
The ``statistical binning" approach \cite{Mirarab2014} was designed
to improve the accuracy of species trees
estimated using summary methods by 
binning sequences from different genes together using 
statistical techniques for detecting
strongly supported incongruence (e.g., using 
bootstrap support on estimated gene trees)
and then estimate new gene trees on the combined
datasets. 
As shown in \cite{WSB}, weighted statistical binning (an improved version of the original statistical binning approach) 
followed by appropriate summary methods
is statistically consistent under the MSC+JC model. 

Note however that the guarantees of statistical consistency provided so far
have nearly always made the following 
assumptions: every locus is recombination-free,
the number of sites
per locus increases without bound, and the number of loci increases
without bound. 
These assumptions are unrealistic, since
recombination-free loci are generally short. 
Therefore, of greater relevance to practice is the
question of statistical consistency where
the number of recombination-free loci increases, but 
the number of sites
per locus is bounded by some $L \in \mathbf{Z}_+$
\cite{Warnow-PC2015,RochWarnow2015}.
We investigate this question for the following methods:

\begin{itemize}
	\item 
	fully partitioned  maximum likelihood,
	\item 
	topology-based
	summary methods  (i.e., methods that combine gene tree topologies),
	and
	\item 
	weighted statistical binning pipelines
	followed by topology-based summary methods.
\end{itemize}

We address this question under the MSC+CFN model,
where the CFN is the symmetric two-state sequence
evolution model (i.e., the two-state version 
of the Jukes-Cantor model); the results
we find extend to nucleotide sequence evolution models, but the
proofs are simplest under the CFN model.
Perhaps surprisingly, our results are negative: for
all $L$, none of the
approaches is statistically consistent under the MSC+CFN model and can even be positively misleading.
Furthermore, this problematic behavior occurs {\em even when} all the genes evolve down a single model CFN tree. 
Therefore, expectations of 
accurate species trees using any of these methods 
given large amounts of data
may be unfounded. 

The key challenge to species tree estimation is  {\em long branch attraction}, a phenomenon that can confound maximum likelihood tree estimation when sequence lengths for each genomic region are finite.
In fact, we show that many species tree estimation methods that are statistically consistent when the number of genomic regions and their lengths both increase become inconsistent when only the number of regions increases, and the sequence length for each genomic region is bounded (however arbitrarily).
These results suggest that all common approaches to species tree estimation are far from being mathematically rigorous, even under highly simplified model conditions where there is no heterogeneity between the loci.
This is a very substantial limitation for multilocus phylogeny estimation
methods in general, and shows that new
approaches for species tree estimation method are needed.

\section*{Multi-locus evolution under the MSC}\label{sec:basics}
Our analysis is based on the MSC+CFN model.
A CFN model tree is a binary tree $(\tree,\Lambda)$ with topology $\tree$ and branch lengths $\Lambda$.
Under the assumption that the tree has $n$ leaves, 
each site (character) $\chi$ refers to the length-$n$ vector of character states corresponding the same homologous site for each taxon. The possible character states are $\{0,1\}$ and evolutionary changes are modeled by a continuous-time Markov process with instantaneous rate matrix $Q = \begin{pmatrix}
-1/2 & 1/2\\
1/2 & -1/2
\end{pmatrix}$.
In particular, the probability of a change along a branch of length $\lambda$ is parametrized as $p=\frac{1}{2}\left(1-e^{-2\lambda}\right)$.
Under the MSC+CFN model, each locus $j$
evolves independently 
on a random gene tree $(\tree_j,\Lambda_j)$, which
is derived from the
multispecies coalescent on a species
tree $(\stree,\Gamma,\theta)$, where the
$\Gamma_e$s are the branch lengths
in units of $\theta_e = 2 N_e \mu_e$ with 
$N_e$ and $\mu_e$ the effective population
size and mutation rate of branch $e$. 
That is, on each branch $e$ of $\stree$, looking backwards in time, lineages entering the branch coalesce at rate $2/\theta_e$ according to the Kingman coalescent. The remaining lineages at the top of the branch enter the ancestral population, and so on.

We assume  that all $m$ loci evolve on the same species tree and that each locus has a constant, finite sequence length $L$. Let $\chi_{ij}$ represent site $i$ on locus $j$, where $1\le i \le L$ and $1\le j \le m$, and let $\chi_{\cdot j}$ represent the set of all characters for locus $j$. 
We refer to the $\chi_{\cdot j}$ as {\em $j$-th locus sequences}. 
Denote the entire set of characters on all loci as $X$.

\paragraph*{Inconsistency of partitioned maximum likelihood}\label{sec:fpml-msc}


Let $\mathcal{L}(\tree^0, \Lambda, \chi)$ denote the likelihood function for a single site $\chi$ under the CFN model on $(\tree^0, \Lambda)$, and let $\ell = \log \mathcal{L}$ be the log-likelihood. 
Under fully partitioned maximum likelihood, we seek a single binary tree topology $\tree^0$ but allow each locus to have its own branch length parameter $\Lambda_j$; hence, the general likelihood function over all sites and all loci is
\begin{eqnarray*}
	\ell^*(\tree^0,\Lambda_1,\dotsc,\Lambda_m, X)
	&=&\sum_{j=1}^m\sum_{i=1}^L \ell(\tree^0,\Lambda_j,\chi_{ij}),
\end{eqnarray*}
and a maximum likelihood topology is any element of the set
\begin{equation}
\label{eq:def-partML}
\underset{\tree^0}{\arg\max}\,
\underset{\Lambda_1,\ldots,\Lambda_m}{\max}\,\ell^*(\tree^0,\Lambda_1,\ldots,\Lambda_m, X).
\end{equation}

\begin{thm}[Inconsistency of partitioned ML]
	\label{thm:incPart}
	Under the MSC+CFN model, fully partitioned maximum likelihood on loci with a bounded number of sites is not statistically consistent and is even positively misleading.
	That is, for any length $L\in\mathbb{N}$, there is a species tree with topology, branch lengths and mutation rates such that, given data generated under the MSC+CFN model, as the number of loci $m\rightarrow\infty$, the maximum likelihood topology is unique and is different from the true species tree topology with probability going to 1.
\end{thm}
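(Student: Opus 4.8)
The plan is to use the law of large numbers to reduce the large-$m$ behaviour of partitioned ML to a deterministic optimisation over the finitely many topologies, and then to exhibit a species tree for which this limiting objective is maximised at a wrong topology via long-branch attraction.

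First, since the per-locus branch lengths $\Lambda_j$ are free and decoupled across loci, the inner maximisation in \eqref{eq:def-partML} separates as
\[
\max_{\Lambda_1,\ldots,\Lambda_m}\ell^*(\tree^0,\Lambda_1,\ldots,\Lambda_m,X)=\sum_{j=1}^m g_{\tree^0}(\chi_{\cdot j}),\qquad g_{\tree^0}(\chi_{\cdot j}):=\max_{\Lambda}\sum_{i=1}^L\ell(\tree^0,\Lambda,\chi_{ij}).
\]
Under MSC+CFN the locus sequences $\chi_{\cdot 1},\ldots,\chi_{\cdot m}$ are i.i.d., hence so are the $g_{\tree^0}(\chi_{\cdot j})$. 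The single-site CFN likelihood extends continuously to the compactified branch-length space (length $\infty$ corresponding to change probability $1/2$), so the supremum defining $g_{\tree^0}$ is attained and the random variable $g_{\tree^0}(\chi_{\cdot 1})$ is bounded, hence integrable. By the strong law of large numbers,
\[
\frac1m\,\max_{\Lambda_1,\ldots,\Lambda_m}\ell^*(\tree^0,\Lambda_1,\ldots,\Lambda_m,X)\;\longrightarrow\;G(\tree^0):=\E\big[g_{\tree^0}(\chi_{\cdot 1})\big]
\]
almost surely as $m\to\infty$, for each of the finitely many binary topologies $\tree^0$. Since the maximisation is over a finite set, whenever $G$ has a unique maximiser $\tree^\star$ the partitioned ML topology equals $\tree^\star$ with probability tending to $1$. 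It therefore suffices to construct a species tree whose true topology is \emph{not} the unique maximiser of $G$.

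For the construction I would take four taxa, where there are three unrooted topologies, and place the parameters in the Felsenstein/long-branch-attraction regime: true topology $12|34$, long substitution branches on the non-sister taxa $1$ and $3$, short branches on $2$, $4$ and on the internal edge. The coalescent branch lengths and mutation rates $\theta_e$ give additional freedom; in particular one may take the internal edge long in coalescent units, so that gene trees coincide with the species tree with probability tending to $1$, realising the stronger ``no heterogeneity'' statement. Here $G(\tree^0)$ is the expected best per-locus fit, and long-branch attraction is the assertion that the split $13|24$ joining the two long branches attains a strictly larger expected fit than the generating split $12|34$. For $L=1$ this follows from two classical facts: with one site per locus each locus carries independent branch lengths, so the partitioned likelihood is exactly the no-common-mechanism likelihood, whose topology maximiser coincides with maximum parsimony (Tuffley--Steel); and parsimony is positively misleading in the Felsenstein zone, converging to $13|24$ (Felsenstein). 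Thus for $L=1$, $G$ is uniquely maximised at $13|24\neq 12|34$.

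The hard part is upgrading this to every bounded $L$, since for $L\ge 2$ the $L$ sites within a locus share their branch lengths and the exact parsimony equivalence is lost. I would drive the parameters deeper into the Felsenstein zone as $L$ grows---long branches toward saturation (change probability $1/2$) and the internal and short branches toward $0$---a limiting regime in which the $L$-site locus distribution factorises and the optimised values $g_{\tree^0}$ admit closed-form comparison, with the attraction toward $13|24$ persisting for every fixed $L$. A continuity argument then produces genuine finite, positive parameters at which $G(13|24)>G(12|34)$ strictly while the third topology is strictly dominated, yielding the required unique maximiser. Establishing this strict separation for all $L$, and in particular ruling out ties among the three topologies, is the delicate quantitative core of the proof; the MSC heterogeneity provides an extra lever here, since tuning the gene-tree distribution further biases the per-locus pattern frequencies toward the attracting split.
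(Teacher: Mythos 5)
Your overall architecture is the same as the paper's: decouple the per-locus maximisations, apply the strong law of large numbers to reduce to comparing $G(\tree'')=\E[\sup_\Lambda \ell(\tree'',\Lambda,\chi_{\cdot 1})]$ over the three quartet topologies (this is exactly Lemma~\ref{lem:conv-like}), place the parameters in a Felsenstein-zone configuration, and handle the MSC by making gene trees match the species tree with probability close to $1$ (the paper's Claim~\ref{claim:species-fels} makes this precise as a mixture $(1-\epsilon)\mathcal{D}_g^L+\epsilon\mathcal{R}$ with $\epsilon$ small, rather than using the coalescent as an extra ``lever''). Your $L=1$ argument via Tuffley--Steel (partitioned ML with one site per locus is the no-common-mechanism model, whose ML topology is the maximum parsimony topology) combined with Felsenstein's inconsistency of parsimony is a genuinely different and clean route for that special case, which the paper does not use.

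The gap is in the step you yourself flag as the ``delicate quantitative core'': for general fixed $L\ge 2$ you never establish the strict inequality $G(\tree^*)>G(\tree^0)$, and the limiting regime you propose would not deliver it as stated. Driving the long branches to saturation ($p\to 1/2$) makes the $a$ and $c$ rows asymptotically independent uniform noise; in that limit every topology attains the same optimal fit (set the pendant branches to $a$ and $c$ to infinity on any topology), so the three values of $G$ coincide and the sign of the separation is determined entirely by lower-order terms--which is precisely the computation left undone. The paper instead sends \emph{all} substitution probabilities to zero at different rates, $p_a^0=p_c^0=\rho$ and $p_b^0=p_d^0=p_m^0=\rho^3$, and the content of its Lemma~\ref{lem:key1} is a pattern-by-pattern analysis: up to $\bigo(\rho^3)$ only five locus patterns occur, the only informative one among them is the single-site $ac|bd$ split with probability $(1/2)^L\rho^2+\bigo(\rho^3)$, on the four uninformative dominant patterns $\sup_\Lambda\ell(\tree^*,\Lambda,x)\ge\sup_\Lambda\ell(\tree^0,\Lambda,x)$ (via explicit optimisation of $\phi_{ab}$, giving $\phi_{ab}=(L-2)/L$ etc.), and on the informative pattern $\tree^*$ beats $\tree^0$ by a constant $K_{12}=-\log(1/L)-(L-1)\log(1-1/L)>0$. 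Summing, $G(\tree^*)-G(\tree^0)\ge K_{12}(1/2)^L\rho^2+\bigo(\rho^3)>0$ for $\rho$ small, with the remainder controlled by the uniform bounds $-4L\log 2\le\sup_\Lambda\ell\le -L\log 2$. Without some substitute for this per-pattern likelihood comparison--in particular a reason why every dominant \emph{uninformative} pattern cannot favour the true topology at the relevant order, and why the unique informative dominant pattern supports the wrong split--your proposal does not yet prove the theorem for $L\ge 2$, nor does it rule out ties.
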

\noindent The proof of this theorem is provided below.

\paragraph{Inconsistency of topology-based summary methods}
Summary methods have been developed that are designed to address heterogeneity between gene tree topologies due to ILS, and are statistically consistent under the MSC model.
We  consider topology-based summary methods that take as input unrooted gene trees, and only use their topologies and not any additional information (e.g., sequence data, branch lengths, bootstrap support).
\begin{itemize}
	\item We assume that the tree provided for a given gene sequence alignment is its maximum likelihood gene tree, and if there is a tie for the best maximum likelihood tree topology, then a random best-scoring tree is selected. 
\end{itemize}

When the number of species is four, then the summary method is selecting the best unrooted tree topology from the three possible unrooted tree topologies, also referred to as quartet trees. 
By \cite{Allman}, under the MSC the most probable quartet tree is the true species tree for any four species (i.e., there is no anomaly zone on unrooted four-leaf species trees).
Hence, in the four species case, we will make the assumption that the summary method will return the tree topology that appears the most
frequently among its input gene trees, as this is
a statistically consistent technique for estimating the unrooted
species tree on four leaves.
We refer to this most frequent quartet tree as the ``dominant" quartet tree.
That is, we
restrict ourselves to the following ``reasonable'' property of
a summary method
$\mathcal{A}$:
\begin{itemize}
	\item
	When $n=4$, as the number of loci $m$
	increases then with probability converging to $1$,
	$\mathcal{A}(\mathcal{T}_1,\ldots,\mathcal{T}_m)=t$ where $t$ is
	the quartet tree that appears with the highest frequency in the
	input $\mathcal{T}_1,\ldots,\mathcal{T}_m$;  if there
	are ties, then $\mathcal{A}$ picks uniformly at random between the
	most frequent quartet trees.
\end{itemize}
We will say that the summary method $\mathcal{A}$ is
{\em reasonable} if it satisfies
this property.
Many of the popular
summary methods (e.g., ASTRAL and BUCKy) are reasonable in that sense.
\begin{thm}[Inconsistency of reasonable summary methods]
	\label{thm:incReas}
	Under the MSC+CFN model, any reasonable summary method $\mathcal{A}$ with maximum likelihood input trees on loci with a bounded number of sites is not statistically consistent.
	That is, for any length $L\in\mathbb{N}$, there is a species tree with topology, branch lengths, and mutation rates, such that given data generated under the MSC+CFN model, as the number of loci $m\rightarrow\infty$, the topology produced by $\mathcal{A}$ is unique and is different from the true species tree topology with probability going to 1. 
\end{thm}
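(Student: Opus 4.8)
The plan is to reduce the statement to the four-taxon case and then to a statement about a single locus. Fix $n=4$ with true species tree topology $ab|cd$. Since $\mathcal{A}$ is reasonable, as $m\to\infty$ its output equals, with probability tending to $1$, the quartet topology that appears most frequently among the input trees $\tree_1,\ldots,\tree_m$, where each $\tree_j$ is the CFN maximum likelihood tree computed from the $L$ sites $\chi_{\cdot j}$ of locus $j$. Because the loci are i.i.d.\ under the MSC+CFN model, for each of the three quartet topologies $t\in\{ab|cd,\,ac|bd,\,ad|bc\}$ the quantity $q_t:=\prob[\text{the ML tree of a single locus is } t]$ is well defined (resolving ties by the stated uniform rule), and by the strong law of large numbers the empirical frequency of $t$ among the $m$ loci converges to $q_t$ almost surely. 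Hence the dominant estimated quartet, and therefore $\mathcal{A}$'s output, equals the unique maximizer of $t\mapsto q_t$ with probability tending to $1$, provided that maximizer is unique. It therefore suffices to construct a four-taxon species tree $(\stree,\Gamma,\theta)$ and mutation rates for which $\arg\max_t q_t$ is unique and differs from $ab|cd$.

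The construction exploits long-branch attraction, and its feasibility rests on the fact that the model supplies two essentially independent sets of knobs: the coalescent branch lengths $\Gamma_e$ of $\stree$ control the distribution of the gene-tree topology under the MSC, while the per-branch mutation rates $\mu_e$ rescale coalescent time into substitution time and hence control the CFN branch lengths seen by the sequence-generation step. I would place the two non-sister taxa $a$ and $c$ on long terminal branches and keep $b$, $d$ and the internal edge short, in substitution units. Concretely, I would choose the internal coalescent branches of $\stree$ long enough that gene trees are concordant (topology $ab|cd$) with high probability, so that ILS is essentially suppressed and, in particular, the MSC itself favors the true topology; simultaneously I would assign a large mutation rate to the pendant edges of $a$ and $c$ and small mutation rates to the remaining edges, so that on a typical concordant gene tree the two long pendant branches subtend a short internal edge. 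This is exactly the Felsenstein-zone configuration in which maximum likelihood on a bounded number of sites is biased toward grouping the two long-branch taxa, i.e.\ toward the discordant topology $ac|bd$, and since this bias targets the $ac|bd$ split specifically, it leaves $q_{ad|bc}$ small, which is what gives a strict maximizer.

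The main obstacle, and the place where the boundedness of $L$ is indispensable, is to prove that this finite-sample bias is strong enough to make $q_{ac|bd}>q_{ab|cd}$ and $q_{ac|bd}>q_{ad|bc}$. One cannot avoid using $L<\infty$: maximum likelihood is statistically consistent for a fixed CFN tree as the number of sites grows, so were $L$ allowed to increase, each locus would recover its true gene-tree topology and $\arg\max_t q_t$ would revert to $ab|cd$. The quantitative heart of the argument is therefore a single-locus estimate: for $L$ fixed and the $a,c$ pendant branches taken sufficiently long, the probability that CFN maximum likelihood on $L$ sites returns $ac|bd$ is bounded away from the probability that it returns $ab|cd$. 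I would establish this using the explicit two-state structure of CFN, under which the ML quartet on four taxa is determined by comparing a small number of site-pattern frequencies; as the long branches approach saturation the columns at $a$ and $c$ behave like independent fair coins, and the resulting pattern statistics systematically favor the $ac|bd$ split, with a bias that survives averaging over the (overwhelmingly concordant) MSC gene-tree distribution. This is the same finite-$L$ CFN computation developed for Theorem~\ref{thm:incPart}, which I would reuse here; combining it with the law-of-large-numbers reduction of the first paragraph and verifying that the maximizer is strict (so that no ties arise and the output topology is unique) completes the proof.
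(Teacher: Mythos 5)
Your overall architecture matches the paper's: reduce via the strong law of large numbers to the single-locus probabilities $q_t$ that maximum likelihood returns topology $t$ (this is Lemma~\ref{lem:conv-reas} plus Lemma~\ref{lem:exp-reas} in the paper), place the gene trees in the Felsenstein zone, and suppress ILS so that the MSC+CFN sequence distribution is a small perturbation of that of a fixed gene tree (the paper does this exactly via the mixture decomposition of Claim~\ref{claim:species-fels}). You also correctly identify why bounded $L$ is indispensable and that a strict maximizer of $t\mapsto q_t$ is needed for uniqueness of the output.

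The one genuine soft spot is the quantitative mechanism you give for the single-locus bias. You argue in the saturation regime: pendant branches of $a$ and $c$ so long that their characters are nearly independent fair coins, with ``pattern statistics'' then favoring $ac|bd$. That is a parsimony-style argument, and for maximum likelihood it is not obviously sound: in the exact saturation limit the per-site distribution (one quarter constant, one quarter singleton-$a$, one quarter singleton-$c$, one quarter $ac|bd$-split) is realizable on \emph{all three} topologies by setting $p_a=p_c=1/2$ and the remaining branch probabilities to $0$, so all three topologies attain the same likelihood in that limit and the direction of the finite-$L$ ML bias near saturation would require a delicate second-order analysis that you do not supply. The paper works at the opposite end of the Felsenstein zone, $\rho\to 0$ with $p^0_a=p^0_c=\rho$ and $p^0_b=p^0_d=p^0_m=\rho^3$: there a typical locus has at most one or two variable sites, the only non-negligible informative locus pattern is a single $ac|bd$ site among constants (probability of order $\rho^2$, versus $\bigo(\rho^3)$ for anything supporting $ab|cd$ or $ad|bc$), and the exact computation in Lemma~\ref{lem:key1}(b) shows that on such a locus $\tree^*$ is the strict ML optimizer by a gap $K_{12}>0$ while every other common pattern yields a three-way tie, so that $q_{ac|bd}-q_{ab|cd}$ is of order $\rho^2$ and positive. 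That is precisely the calculation you say you would reuse from Theorem~\ref{thm:incPart}, but it lives in the $\rho\to 0$ regime, not the saturation regime you describe; as written, your quantitative step either has to be replaced by the paper's small-$\rho$ analysis or backed by a genuinely new argument for ML (not parsimony) near saturation.
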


\paragraph*{Inconsistency of weighted statistical binning followed by a summary method}
The ``statistical binning" method, and its improved version ``weighted statistical binning'', were developed to address challenges in species tree estimation that result from gene tree estimation error. 
In \cite{WSB} it was shown that statistical binning was inconsistent under the MSC+CFN model but that weighted statistical binning (WSB) was statistically consistent.
Those proofs depend crucially on the number of sites per locus increasing to infinity, and so this previous work did not address the case we consider here, where each site has length bounded by $L$.

In a WSB pipeline, estimated gene trees with bootstrap support are provided for every locus, and then an incompatibility graph is computed for that set of gene trees with branch support.
The graph is used to partition the genes into sets (called ``bins") and then ``supergene trees" are computed using a fully partitioned maximum likelihood analysis on each bin. 
These supergene trees are then given to the selected summary method as input, and a species tree is returned.
In a weighted statistical binning pipeline, each supergene tree is replicated by the number of genes in its associated bin.
The incompatibility graph depends on a parameter $B$, as follows: two gene trees are considered to be incompatible if there is a pair of edges, one from each tree, each with bootstrap support strictly greater than $B$, that conflict.
Hence, if $B = 1$, then no two trees can be considered incompatible. 


\begin{thm}[Inconsistency of WSB pipeline followed by reasonable summary method]
	\label{thm:incWSB}
	Under the multi-locus MSC+CFN model, with a single site evolving down each gene tree, the WSB pipeline followed by a reasonable summary method is not statistically consistent.
\end{thm}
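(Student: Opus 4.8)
The plan is to show that when a single site evolves down each gene tree (i.e.\ $L=1$), the entire WSB pipeline collapses onto a quantity already controlled by Theorems~\ref{thm:incPart} and~\ref{thm:incReas}, so that no new long-branch-attraction construction is needed: we simply reuse the four-taxon species trees built there. The starting observation is that bootstrapping a single-site alignment is degenerate — resampling one column with replacement always returns the same column — so every bootstrap replicate of a locus equals its maximum likelihood gene tree. Consequently each estimated gene tree is a fully resolved quartet (ties among the three topologies being broken at random, as specified in the reasonable-method assumption), and its internal edge carries bootstrap support exactly $1$.

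First I would dispose of the easy regime $B\ge 1$. Since support can never exceed $1$ and strict inequality is required, no pair of gene trees is ever declared incompatible; the incompatibility graph has no edges, WSB places all $m$ loci in a single bin, and the resulting supergene tree is the fully partitioned maximum likelihood tree on the entire data set $X$. The reasonable summary method then receives this single topology (replicated $m$ times) and returns it. Thus for $B\ge 1$ the output coincides with the partitioned ML topology of~\eqref{eq:def-partML}, and Theorem~\ref{thm:incPart}, applied with $L=1$, already exhibits a species tree on which this converges to the wrong topology.

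Next comes the main case $B<1$. Here two resolved gene trees are incompatible precisely when their quartet topologies differ, because the three nontrivial four-taxon bipartitions are pairwise incompatible and each carries support $1>B$. Hence every colour class of any proper colouring of the incompatibility graph — i.e.\ every bin — is \emph{topology-homogeneous}, containing only loci that display a common quartet $t_k$. Running fully partitioned maximum likelihood on such a bin returns $t_k$, since each single-site log-likelihood term depends on the topology only through its own character and every character in the bin favours $t_k$. When each supergene tree is replicated by its bin size and the results are pooled, the total weight accruing to $t_k$ equals $\sum_{b:\,k(b)=k}|b|$, which is exactly the number of loci whose estimated gene tree is $t_k$, \emph{independently} of how the colouring distributed the loci among bins. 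Therefore the summary method is fed precisely the multiset of single-site maximum likelihood quartets and, being reasonable, returns its dominant topology — the very object analysed in Theorem~\ref{thm:incReas}. Invoking that theorem's construction (again with $L=1$) yields a species tree on which the dominant single-site quartet, and hence the WSB output, converges to the wrong topology as $m\to\infty$.

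The step requiring the most care is the ``collapse'' in the case $B<1$: one must verify that binning confers \emph{no} benefit, i.e.\ that per-bin partitioned ML faithfully reproduces each locus's own topology and that the replication step makes the pooled weights agree with the raw single-site quartet counts, regardless of which (heuristic or even adversarial) proper colouring is used and regardless of how the uninformative, randomly-resolved loci are assigned to bins. Once this bookkeeping is in place, the genuine mathematical content — the long-branch-attraction bias that renders the dominant single-site quartet incorrect — is inherited wholesale from Theorems~\ref{thm:incPart} and~\ref{thm:incReas}, and no separate probabilistic estimate is needed.
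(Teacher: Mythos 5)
There is a genuine gap, and it sits exactly at the step you flagged as ``requiring the most care.'' Your premise that every single-site gene tree carries bootstrap support exactly $1$ is false for ML-\emph{un}informative loci (constant sites and singleton sites). It is true that resampling one column from one column returns the same column, but the bootstrap support of an edge is the fraction of replicates whose \emph{re-estimated} tree contains that edge, and on an uninformative column all three quartet topologies are tied for the ML optimum, so the tie is broken at random in each replicate. Hence the support of the (randomly resolved) internal edge of an uninformative gene is $1/3$, not $1$ --- this is stated explicitly in the paper's Lemma~\ref{lemma1}. Since the inconsistency threshold requires support strictly greater than $B$ on \emph{both} conflicting edges, for any $B \geq 1/3$ the uninformative genes are compatible with everything and appear as isolated vertices in the incompatibility graph, not as members of the complete $3$-partite graph. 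Under the CFN model in the Felsenstein zone with $\rho$ small these uninformative loci are the overwhelming majority of the data ($Q_{12} = \Theta(\rho^2)$ while constant sites have probability $\tfrac{1}{2}-\bigo(\rho)$ each).

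This breaks your ``collapse'' bookkeeping: because the uninformative loci are unconstrained by the colouring, the minimum-vertex-colouring/balanced-bins step distributes them so as to equalize the three bin sizes, and the replication weights fed to the summary method are the bin sizes, \emph{not} the raw single-site quartet counts. The paper's proof of Theorem~\ref{thm:incWSB} (via Lemma~\ref{lemma1} and Theorem~\ref{thm:wsb-one-site}) exploits precisely this: when the probability of being ML-informative and supporting any fixed $t$ is at most $1/3$, the bins can be balanced, the output distribution on the three topologies is flat, and a reasonable summary method then cannot converge to the true tree. So the actual mechanism of inconsistency is swamping by uninformative genes yielding a flat distribution, not the long-branch-attraction bias you inherit from Theorems~\ref{thm:incPart} and~\ref{thm:incReas} (that mechanism is what drives the WSB* result, Theorem~\ref{thm:incWSB*}, after uninformative genes are filtered out). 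Your $B\ge 1$ reduction to partitioned ML is fine as far as it goes, but the main case $1/3 \le B < 1$ needs the paper's balanced-bins argument rather than your weight-preservation claim, which is false as stated.
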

\noindent The proof of this theorem is given in the Appendix, and establishes that when each locus has a single site  then there is a $B < 1$ and a tree with topology, branch lengths, and mutation rates 
such that, given data generated under the MSC+CFN model, as the number of loci $m\rightarrow\infty$, the distribution produced by the WSB pipeline with support threshold $B$ is flat.
Hence, the application of $\mathcal{A}$ to this distribution will not converge to the true species tree topology with probability going to $1$.
In other words, the WSB pipeline is not statistically consistent under the MSC+CFN model because uninformative genes
can swamp the bins and produce a flat distribution.

The following modification to the WSB pipeline (which we refer to as the WSB* pipeline) to remove all genes that have no branches with bootstrap support above $B$ addresses
this problem in that the distribution is no longer flat:
\begin{itemize}
	\item Remove all gene trees that do not support any internal edge above the bootstrap threshold $B$ from the analysis before
	doing any binning.
\end{itemize}
However, we still show:
\begin{thm}[Inconsistency of WSB pipeline followed by reasonable summary method]
	\label{thm:incWSB*}
	The WSB* pipeline followed by $\mathcal{A}$ is not only not statistically consistent but is positively misleading. That is, for any length $L\in\mathbb{N}$, there is a $B < 1$ and a species tree with topology, branch lengths and mutation rates such that, given data generated under the MSC+CFN model, as the number of loci $m\rightarrow\infty$,  the topology produced by $\mathcal{A}$ after going through the WSB* pipeline with support threshold $B$ is unique and is different from the true topology with probability going to 1.
\end{thm}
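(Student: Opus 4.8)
The plan is to reuse the long-branch-attraction construction behind Theorem~\ref{thm:incReas} and show that the WSB* pipeline, far from repairing the bias, sharpens it. Throughout I take the number of species to be $n=4$, so that the reasonable summary method $\mathcal{A}$ simply returns the quartet occurring most frequently among its (replicated) input trees. Fix $L$, and let $\stree$ be a four-leaf species tree whose branch lengths and mutation rates are chosen so that a single locus of length $L$, evolved under the MSC+CFN model, has strictly the largest probability of producing the \emph{wrong} maximum-likelihood quartet $t'\neq t^{*}$, where $t^{*}$ is the true species quartet; such a tree exists by the argument proving Theorem~\ref{thm:incReas}. The extra ingredient needed here is control of the \emph{bootstrap support} attached to each estimated quartet, since WSB* discards every locus whose single internal edge is not supported above the threshold $B$.

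First I would record that, as $m\to\infty$, the law of large numbers lets me replace all frequencies by their locus-level expectations: the empirical joint distribution of (ML quartet, internal-edge bootstrap support) over the $m$ loci converges in probability to the fixed locus-level law determined by $\stree$ and $L$. Next I would choose the threshold $B<1$. The guiding observation is that in the long-branch regime the misleading quartet $t'$ is produced not only most often but also with systematically higher support than the correct quartet $t^{*}$, because the attracting long branches generate a strong, repeatable site pattern while the true internal edge of $\stree$ is nearly swamped by incomplete lineage sorting. The WSB* filter, intended to retain ``informative'' loci, therefore preferentially retains the \emph{misleadingly} informative $t'$-loci and discards the weakly correct $t^{*}$-loci. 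I would pick $B$ so that (i) a strictly positive fraction of $t'$-loci clear the filter, and (ii) among all loci that clear the filter the quartet $t'$ remains strictly dominant in frequency; verifying that such a $B$ exists is where the real work lies.

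Given such a $B$, the remaining steps are essentially bookkeeping. After filtering, every surviving locus carries an internal edge with support above $B$, so two survivors are declared incompatible precisely when their quartets differ; the incompatibility graph is thus complete multipartite with parts indexed by the three quartets, and consequently every bin produced by any incompatibility-respecting binning is topology-homogeneous. For a bin all of whose loci have ML quartet $q$, the fully partitioned log-likelihood factorizes as $\sum_{j}\max_{\Lambda_j}\ell(t,\Lambda_j,\chi_{\cdot j})$, a sum of per-locus profile likelihoods each maximized at $q$; summing these (strict, for the loci with a unique ML quartet, which dominate) inequalities shows the bin's supergene quartet is $q$. Since WSB replicates each supergene tree by its bin size, the multiset ultimately fed to $\mathcal{A}$ contains each quartet with multiplicity equal to the number of surviving loci of that quartet. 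By the choice of $B$ this multiset is strictly dominated by $t'$, so the reasonable property forces $\mathcal{A}$ to output the unique quartet $t'\neq t^{*}$ with probability tending to $1$, which is exactly positive misleadingness.

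The main obstacle is the middle step: exhibiting a single threshold $B<1$ for which conditioning on ``support above $B$'' strictly preserves the dominance of $t'$, i.e.\ ruling out the possibility that high-support loci are disproportionately correct. This demands a handle on the joint law of the ML quartet and its bootstrap support for short alignments. I expect to dispatch it by making the long-branch signal quantitatively strong at the chosen branch lengths and mutation rates --- strong enough that its contribution to both the frequency and the support of $t'$ dominates the weak, ILS-diluted signal for $t^{*}$ \emph{uniformly} over an interval of thresholds --- so that every $B$ in that interval works, and the degenerate cases (ties in the per-locus ML quartet, handled by the random-selection rule, and near-ties filtered out by the support condition) contribute a vanishing fraction that cannot overturn the strict limiting margin.
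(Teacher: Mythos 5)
Your overall architecture matches the paper's: Felsenstein-zone gene trees, law of large numbers to pass to locus-level expectations, the observation that surviving loci form a complete multipartite incompatibility graph so bins are topology-homogeneous, the factorization of the partitioned likelihood so each supergene tree equals its bin's common quartet, and the conclusion that $\mathcal{A}$ returns the dominant wrong quartet. But the step you yourself flag as ``where the real work lies'' --- exhibiting a threshold $B<1$ for which the loci surviving the bootstrap filter are still dominated by the wrong quartet --- is a genuine gap, and your heuristic for closing it points in the wrong direction. The loci that make $\tree^*=ac|bd$ dominant in the unfiltered ML-quartet frequency (Claim~\ref{claim:reas}) are those in $\patterns_{12}$: one $ac|bd$ split site plus $L-1$ constant sites, occurring with probability of order $\rho^2$. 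These loci do \emph{not} have high bootstrap support: a bootstrap resample has probability at least $(1/L)^L$ of consisting only of copies of a constant site, on which all three topologies tie, so their support is at most $1-\frac{2}{3}(1/L)^L$. The paper's resolution (Lemma~\ref{lem:key2}) is to set $B\geq 1-\frac{2}{3}(1/L)^L$, so close to $1$ that \emph{every} locus containing an uninformative site or two conflicting splits is rejected, and the only survivors are the split-saturated loci (all $L$ sites equivalent to the same split), which have support exactly $1$. One must then separately verify that among this much rarer surviving class the wrong split still wins, which the paper does by showing $\qsat_{ac|bd}=\Theta(\rho^{2L})$ versus $\qsat_{ab|cd},\qsat_{ad|bc}=\bigo(\rho^{3L})$.

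So the population of survivors is not, as you suggest, a high-support sub-population of the loci driving the unfiltered dominance of $t'$; it is an entirely different (and exponentially rarer in $L$) family of locus patterns, and the dominance of the wrong quartet among them requires its own order-of-magnitude computation rather than an appeal to the strength of the LBA signal ``uniformly over an interval of thresholds.'' For intermediate $B$ your program would require controlling the full joint law of the ML quartet and its bootstrap support for length-$L$ alignments, which you do not supply and which the paper deliberately avoids. A secondary, smaller omission: you do not say how to pass from the single-gene-tree analysis to the MSC+CFN model; the paper does this via Claim~\ref{claim:species-fels}, writing the MSC+CFN sequence distribution as $(1-\epsilon)\,\mathcal{D}_g^L[\tree^0,\Lambda^0]+\epsilon\,\mathcal{R}$ and taking $\epsilon$ small enough that the $\mathcal{R}$-loci cannot overturn the limiting margin.
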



\section*{Theoretical framework}

Our analysis in fact establishes a
stronger---perhaps more counter-intuitive---result. We show that partitioned maximum likelihood, topology-based summary methods, and weighted statistical binning pipelines are statistically inconsistent for multi-locus evolution {\em where there is no gene tree heterogeneity at all}, when all the loci have only $L$ sites for any arbitrarily selected $L$. By a continuity argument, we also establish that these negative results imply that these methods, which were designed to address heterogeneity across the genome resulting from ILS, are also statistically inconsistent under the MSC+CFN model.

\paragraph{Setting for analysis}
Fix $\tree^0$ to be the four-taxon topology $ab|cd$ on $\{a,b,c,d\}$ and let $\Lambda^0$ denote a vector of branch lengths on $\tree^0$ under the CFN model. Specifically, denote the endpoint of the middle edge on the $ab$ side as $e$, and on the $cd$ side as $f$ (see Figure \ref{fig1}). For this tree, denote the length of branch $ae$ as $\lambda^0_a$, $be$ as $\lambda^0_b$, $cf$ as $\lambda^0_c$, $df$ as $\lambda^0_d$ and $ef$ as $\lambda^0_m$.
\begin{figure}
	\begin{center}
		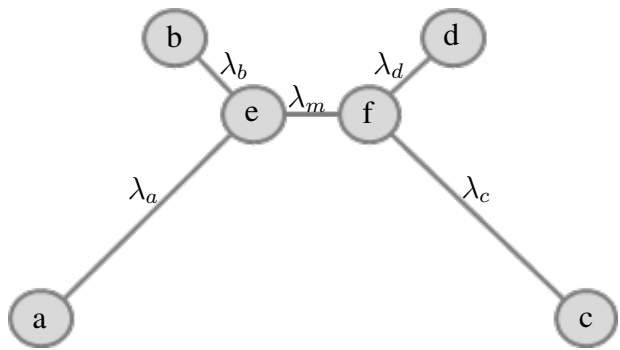
	\end{center}
	\caption{A four-taxon tree}\label{fig1}
\end{figure}
For a branch length $\lambda$, we will also
use the parametrization $\phi 
= -\frac{1}{2} \log \lambda$ in terms of which
the probability of a change along this branch is
\begin{equation}
\label{eq:theta}
p=\frac{1}{2}\left(1-e^{-2\lambda}\right)=\frac{1}{2}(1-\phi),
\end{equation}
and the probability of no change is
$
q=\frac{1}{2}(1+\phi).
$
See \cite[Section 8.6]{semple2003phylogenetics} for more details on this standard parameterization.
Denote the $p$-, $q$-, and $\phi$-parameters as defined above for each branch using the same subscripts.
We choose $\Lambda^0$ to construct a Felsenstein zone tree (i.e., a four-leaf model tree where some tree estimation methods are positively misleading, as shown in \cite{Felsenstein1978}) 
where, for a parameter $\rho > 0$, $p^0_a=p^0_c=\rho$ and $p^0_b=p^0_d=p^0_m=\rho^3$. Note that for any $\rho>0$, we can set $\lambda^0_a=\lambda^0_c=-\frac{1}{2}\log (1-2\rho)$ and $\lambda^0_b=\lambda^0_d=\lambda^0_m=-\frac{1}{2}\log (1-2\rho^3)$ to satisfy this relationship. We assume that
the characters $\chi_{\cdot j}$, $j =1,2,\ldots$, are generated under the CFN model on $(\tree^0,\Lambda^0)$.
We also denote the alternate topologies
by $\tree^*=ac|bd$ and $\tree^1=ad|bc$.

\paragraph{Basic claims}
Our main theorems are implied respectively by the following basic claims.
\begin{claim}[Partitioned ML: Felsenstein zone]
	\label{claim:partML}
	Assume that
	the length-$L$ locus sequences $\chi_{\cdot j}$, $j =1,2,\ldots$, are generated under the CFN model on $(\tree^0,\Lambda^0)$ and let
	$\hat\tree_j$ be the fully partitioned maximum likelihood topology obtained from the sequences of the first $j$ loci. For any length $L \geq 1$, there is $\rho > 0$ small enough such that, with probability one,
	$\hat\tree_j \to \tree^*$ as $j \to +\infty$.
\end{claim}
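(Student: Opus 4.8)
The plan is to exploit the additive, locus-decoupled structure of the fully partitioned log-likelihood. Because each locus carries its own branch-length vector, the inner maximization in \eqref{eq:def-partML} factorizes across loci: for a fixed topology $\tree$,
\[
\max_{\Lambda_1,\dots,\Lambda_m}\ell^*(\tree,\Lambda_1,\dots,\Lambda_m,X)=\sum_{j=1}^m g(\tree,\chi_{\cdot j}),
\qquad
g(\tree,\chi_{\cdot j}):=\max_{\Lambda}\sum_{i=1}^L\ell(\tree,\Lambda,\chi_{ij}).
\]
Since all loci are generated i.i.d.\ on $(\tree^0,\Lambda^0)$, the $g(\tree,\chi_{\cdot j})$ are i.i.d.\ in $j$, and they are bounded: fixing a reference branch-length vector makes every column probability at least some $c_0>0$, while no column probability exceeds $\tfrac12$, so $L\log c_0\le g(\tree,\cdot)\le L\log\tfrac12$ and in particular $\E|g(\tree,\chi_{\cdot 1})|<\infty$. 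The strong law of large numbers then gives $\tfrac1m\sum_{j\le m}g(\tree,\chi_{\cdot j})\to\E[g(\tree,\chi_{\cdot 1})]$ almost surely, simultaneously for the three unrooted four-taxon topologies. Thus it suffices to prove that $\tree^*$ is the \emph{strict} maximizer of $\tree\mapsto\E[g(\tree,\chi_{\cdot 1})]$ for $\rho$ small: the partitioned ML topology then equals $\tree^*$ for all large $j$ with probability one, which is the claim.

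Next I would expand $\E[g(\tree,\chi_{\cdot 1})]$ as $\rho\to0$. Classify each column by its unrooted leaf pattern: \emph{constant}, \emph{singleton} (one taxon differs), or one of the splits $ab|cd$, $ac|bd$, $ad|bc$. A constant column has probability $1-O(\rho)$; an $\{a\}$- or $\{c\}$-singleton has probability $\Theta(\rho)$; and every remaining pattern has probability $O(\rho^3)$ \emph{except} the split $ac|bd$, which is produced by simultaneous changes on the two long pendant edges $a$ and $c$ and hence has probability $\Theta(\rho^2)$. This ordering is the engine of long-branch attraction: the misleading split $ac|bd$ (supporting $\tree^*$) is quadratically more likely than the true-tree split $ab|cd$ (supporting $\tree^0$, probability $\Theta(\rho^3)$, from a change on the short internal edge) and than $ad|bc$ (probability $\Theta(\rho^4)$). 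Moreover $g(\tree,\cdot)$ depends on $\tree$ only through the split columns: a locus whose non-constant columns are all singletons is fit equally well on every topology, since each singleton is explained by a single change on its own pendant edge, so such loci contribute identically to all three expectations and cancel in the pairwise differences.

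The decisive step is the contribution of loci whose unique informative column is a split. For a locus made of one $ac|bd$ column and $L-1$ constant columns, fitting $\tree^*$ needs only one change (on the internal edge), whereas fitting $\tree^0$ or $\tree^1$ forces two changes (on a pair of pendant edges). Carrying out the one-parameter (resp.\ two-parameter) maximization explicitly---the optimal change probability equals $1/L$ for $L\ge2$ and saturates at $\tfrac12$ for $L=1$---gives $g(\tree^*,\cdot)-g(\tree^0,\cdot)=c_L>0$, with $c_L=\log L+(L-1)\log\tfrac{L}{L-1}$ for $L\ge2$ and $c_L=\log2$ for $L=1$, and symmetrically against $\tree^1$. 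Weighting by the $\Theta(\rho^2)$ probability of such loci and subtracting the analogous $\Theta(\rho^3)$ contribution of the $ab|cd$ loci yields $\E[g(\tree^*)]-\E[g(\tree^0)]=L\,c_L\,\rho^2+o(\rho^2)>0$, and likewise $\E[g(\tree^*)]-\E[g(\tree^1)]>0$, for all sufficiently small $\rho$. Hence $\tree^*$ is the strict maximizer and the claim follows.

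The main obstacle is turning this $\rho$-expansion into rigorous bounds rather than a heuristic. Three points need care: (i) the cancellation of the constant and singleton loci should be verified as an \emph{exact} identity of maximized locus-likelihoods across the three topologies, not merely to leading order; (ii) the per-locus branch-length maximization must be controlled uniformly, including the boundary optima where a change probability reaches $0$ or $\tfrac12$, so that each locus-type contributes a well-defined $\rho$-independent constant to $g$; and (iii) one must show that every topology-\emph{dependent} locus other than the single-$ac|bd$-split loci has probability $O(\rho^3)$---noting that the only $\Theta(\rho^2)$ loci with two non-constant columns are the two-singleton loci, which are topology-neutral---so that the pooled remainder is $o(\rho^2)$ and cannot flip the sign of the leading term. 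Step (iii), the bookkeeping that confirms no hidden $\Theta(\rho^2)$ topology-dependent contribution exists, is where I expect the most delicate estimates.
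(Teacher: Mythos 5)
Your proposal follows essentially the same route as the paper: decouple the partitioned likelihood across loci, apply the strong law to the bounded per-locus maximized log-likelihoods, and reduce the claim to showing that the expected per-locus ML score is strictly larger on $\tree^*$, which is then established by the same $\rho$-expansion---constant and singleton loci are topology-neutral, the $\Theta(\rho^2)$ loci with a single $ac|bd$ column give a strictly positive gap $c_L$ (the paper's $K_{12}$), and all other topology-dependent loci are $O(\rho^3)$ (the paper's Lemma~\ref{lem:key1}). Your treatment is correct, and your explicit handling of the boundary optimum at $L=1$ (where the optimal change probability saturates at $\tfrac12$, giving $c_1=\log 2$) is in fact slightly more careful than the paper's formula for $K_{12}$, which degenerates at $L=1$.
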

\begin{claim}[Reasonable summary methods: Felsenstein zone]
	\label{claim:reas}
	Assume that
	the length-$L$ locus sequences $\chi_{\cdot j}$, $j =1,2,\ldots$, are generated under the CFN model on $(\tree^0,\Lambda^0)$ and let
	$\hat\tree_j$ be the topology obtained from a reasonable summary method $\mathcal{A}$ on the sequences of the first $j$ loci using maximum likelihood.
	For any length $L \geq 1$, there is $\rho > 0$ small enough such that, with probability one,
	$\hat\tree_j \to \tree^*$ as $j \to +\infty$.
\end{claim}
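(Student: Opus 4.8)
The plan is to reduce the behaviour of the summary method to the leading-order asymptotics (as $\rho \to 0$) of single-locus maximum likelihood gene-tree estimation, and then to aggregate across loci via the strong law of large numbers. Since $\mathcal{A}$ is reasonable, it returns the strictly most frequent quartet topology among the per-locus maximum likelihood trees $\hat t_1,\hat t_2,\ldots$ whenever such a strict plurality exists, randomizing only among tied topologies. By the strong law of large numbers the empirical frequency of each topology $T\in\{\tree^0,\tree^*,\tree^1\}$ converges almost surely to $P_T(\rho):=\prob[\hat t_1=T]$, where ties within a single locus are broken uniformly at random as specified. Hence it suffices to prove that for every $L\ge 1$ there is $\rho>0$ small enough that $P_{\tree^*}(\rho)>P_{\tree^0}(\rho)$ and $P_{\tree^*}(\rho)>P_{\tree^1}(\rho)$; the plurality among the $\hat t_j$ is then strictly $\tree^*$ for all large $j$ almost surely, and therefore $\hat\tree_j\to\tree^*$ almost surely.

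The core of the argument is a leading-order analysis of a single length-$L$ locus. First I would record the CFN site-pattern probabilities on $(\tree^0,\Lambda^0)$ to leading order in $\rho$. Among the three parsimony-informative patterns (those splitting the four taxa $2$--$2$), the pattern supporting $\tree^*=ac|bd$ has probability $\Theta(\rho^2)$, since it is produced by a single substitution on each of the two long branches $a$ and $c$; the pattern supporting the true split $\tree^0=ab|cd$ has probability $\Theta(\rho^3)$, produced by a substitution on the short middle edge $m$; and the pattern supporting $\tree^1=ad|bc$ has probability $\Theta(\rho^4)$. The non-informative (constant or singleton) patterns have total probability $1-\Theta(\rho^2)$. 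Consequently the probability that a given site is informative is $\Theta(\rho^2)$, and the probability that two or more of the $L$ sites are informative is $O(\rho^4)$.

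Next I would characterize the maximum likelihood topology as a function of the realized pattern counts. Two facts drive everything. (i) If a locus contains no informative site---only constant and singleton patterns---then the likelihood, maximized over branch lengths, is identical for all three topologies, because every such pattern is explained using substitutions on pendant edges alone, which are common to all three quartet topologies; hence $\hat t_1$ is a uniform three-way tie. (ii) If a locus contains exactly one informative site, supporting split $T$, then $\hat t_1=T$ uniquely: the informative site is explained on $T$ by a single substitution on the internal edge, whereas on either alternative it forces two substitutions, and this strictly dominates while the remaining sites stay topology-neutral. Verifying (ii) cleanly for arbitrary $L$---that a lone informative site dictates the maximum likelihood topology in the presence of many uninformative sites---is the main obstacle, and I expect it to require a short likelihood-ratio (or Hadamard-transform) computation rather than a soft argument.

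Granting (i) and (ii), the decomposition is immediate. Writing $U$ for the event that all $L$ sites are non-informative (probability $1-\Theta(\rho^2)$) and $S_T$ for the event that exactly one site is informative and it supports $T$, the only contributions to $P_T(\rho)$ up to order $\rho^3$ come from $U$ (split equally by the tie rule) and from the $S_T$, the remaining mass being carried by the event of two or more informative sites, which is $O(\rho^4)$. Thus $P_T(\rho)=\tfrac13\prob[U]+\prob[S_T]+O(\rho^4)$ for each $T$, the equal term $\tfrac13\prob[U]$ cancels in every pairwise comparison, and $\prob[S_{\tree^*}]=\Theta(\rho^2)$, $\prob[S_{\tree^0}]=\Theta(\rho^3)$, $\prob[S_{\tree^1}]=\Theta(\rho^4)$. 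Hence $P_{\tree^*}(\rho)-P_{\tree^0}(\rho)=\Theta(\rho^2)-\Theta(\rho^3)+O(\rho^4)>0$ and $P_{\tree^*}(\rho)-P_{\tree^1}(\rho)=\Theta(\rho^2)>0$ for $\rho$ small, so $\tree^*$ is the strict plurality and the law-of-large-numbers reduction above yields $\hat\tree_j\to\tree^*$ almost surely. Long-branch attraction enters precisely through the $\Theta(\rho^2)$ versus $\Theta(\rho^3)$ gap: the two long non-adjacent branches $a$ and $c$ make the $ac|bd$ pattern the cheapest informative pattern to produce, even though $ac|bd$ is not the true split.
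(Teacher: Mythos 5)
Your proposal follows essentially the same route as the paper: reduce via the strong law of large numbers to comparing the single-locus probabilities $\P[\hat t_1 = T]$ for the three topologies, then show that uninformative loci produce a three-way tie (contributing $1/3$ to each topology, which cancels), that the only informative locus pattern with probability $\Theta(\rho^2)$ is the one with a single $ac|bd$ site among $L-1$ constant sites, and that on this pattern $\tree^*$ is the strict ML optimizer; everything else is absorbed into an $\bigo(\rho^3)$ error. This is exactly the paper's Lemma on convergence of frequencies combined with its key pattern lemma. The one step you flag as "the main obstacle"---that a lone informative site dictates the ML topology---is precisely what the paper proves by an explicit computation (case 5 of the key lemma): using multiplicativity of the $\phi$-parameters along paths and an inclusion bound, it shows $\sup_\Lambda \ell(\tree^0,\Lambda,x) \le L\log(1/2) + 2\log(1/L) + 2(L-1)\log(1-1/L)$ for such $x$, while setting all pendant lengths to $0$ and the internal edge so that $p_m = 1/L$ gives $\sup_\Lambda \ell(\tree^*,\Lambda,x) \ge L\log(1/2) + \log(1/L) + (L-1)\log(1-1/L)$, a strict gap $K_{12}>0$. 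So your instinct that a genuine likelihood computation is needed there is right, and it is the crux of the argument. One simplification worth noting: you do not need your fact (ii) in the generality you state it (one informative site coexisting with singletons), since all such loci have probability $\bigo(\rho^3)$ and can be dumped into the error term by bounding $\P[\hat t_1 = \tree^0]$ from above by the tie contribution plus the total mass of non-dominant patterns, and $\P[\hat t_1 = \tree^*]$ from below using only the pure pattern; only the "one $ac|bd$ site plus all-constant" case must be handled exactly. Likewise your fact (i) need only be verified for the handful of dominant uninformative patterns, as the paper does, rather than for arbitrary constant-plus-singleton loci.
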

\begin{claim}[WSB* pipeline: Felsenstein zone]
	\label{claim:wsb*}
	Let $1 - \frac{2}{3}\left(\frac{1}{L}\right)^L \leq B < 1$.
	Assume that
	the length-$L$ locus sequences $\chi_{\cdot j}$, $j =1,2,\ldots$, are generated under the CFN model on $(\tree^0,\Lambda^0)$ and let
	$\hat\tree_j$ be the topology obtained from the WSB* pipeline with threshold $B$ followed by a reasonable summary method $\mathcal{A}$ on the sequences of the first $j$ loci. There is $\rho > 0$ small enough such that, with probability one,
	$\hat\tree_j \to \tree^*$ as $j \to +\infty$.
\end{claim}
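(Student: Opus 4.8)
The plan is to exploit the fact that the WSB* filter, with the threshold $B = 1 - \frac{2}{3}(1/L)^L$, is so stringent that the only loci surviving it are the maximally decisive ones; once this is established, the whole pipeline collapses to a law-of-large-numbers comparison of the Felsenstein-zone pattern probabilities already set up for Claims~\ref{claim:partML} and \ref{claim:reas}.

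First I would prove the key combinatorial lemma characterizing survival: a length-$L$ locus passes the WSB* filter (its estimated gene tree has an internal edge with bootstrap support strictly above $B$) \emph{if and only if} all $L$ of its sites are parsimony-informative and support one and the same quartet split $S$, in which case the support equals $1$ and the maximum likelihood gene tree is $S$. The ``if'' direction is immediate: every bootstrap resample of such an alignment again consists only of sites supporting $S$, so each replicate returns $S$ and the support is $1 > B$. For ``only if'' I would analyze the boundary case that dictates the exact value of $B$. If a locus has exactly one non-informative (constant or singleton) site and all other sites informative for a single split $S$, then the unique resample that does not return $S$ deterministically is the one drawing only the non-informative site $L$ times; this replicate is a three-way tie and, under uniform tie-breaking, contributes $\frac13$ to the support of the $S$-edge. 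Hence that support is exactly $\bigl(1-(1/L)^L\bigr) + \frac13 (1/L)^L = 1 - \frac23 (1/L)^L = B$, so it does \emph{not} exceed $B$. Any further departure from unanimity --- a second non-informative site, or a site favoring a different split --- only enlarges the mass of resamples failing to return $S$ (for instance the all-copies-of-a-conflicting-site resample has probability $(1/L)^L$ and contributes $0$), pushing every internal edge's support strictly below $B$. This is where I expect the main work to lie, and the calibration of $B$ through the $\tfrac13$ tie-breaking factor at the one-bad-site boundary is the crux.

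Next I would trace the survivors through the rest of the pipeline. Each surviving gene tree is fully resolved with a single split, so two survivors conflict precisely when they carry different splits; thus the incompatibility graph restricted to the survivors is complete $3$-partite, its parts indexed by $\{\tree^0,\tree^*,\tree^1\}$. Since binning places only mutually compatible genes together, every bin is an independent set and therefore lies within a single part, so all genes in a bin share one split $S$. Running fully partitioned maximum likelihood on such a bin --- all of whose sites support $S$ --- returns the supergene tree $S$ (the same single-alignment ML fact used in the earlier claims), and replicating it by the bin size means the multiset handed to $\mathcal{A}$ contains exactly one copy of each surviving gene's split. Consequently the reasonable method $\mathcal{A}$ returns the split most frequent among the surviving loci.

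Finally I would identify that dominant split. Because the loci are i.i.d.\ and, within a locus, the $L$ sites are i.i.d., the probability that a locus survives and supports split $S$ equals $q_S^{\,L}$, where $q_S$ is the probability that a single CFN site on $(\tree^0,\Lambda^0)$ is informative for $S$. A short leading-order computation in $\rho$ --- the Felsenstein-zone estimate underlying the previous claims --- gives $q_{\tree^*} = \Theta(\rho^2)$, $q_{\tree^0} = \Theta(\rho^3)$, and $q_{\tree^1} = \Theta(\rho^4)$, so for all sufficiently small $\rho$ one has $q_{\tree^*} > q_{\tree^0} > q_{\tree^1}$, whence $q_{\tree^*}^{\,L}$ is the strict maximum. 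By the strong law of large numbers the empirical proportions of the three surviving classes converge almost surely to these values, so for all large $j$ the strictly most frequent surviving split is $\tree^*$ and $\mathcal{A}$ outputs $\tree^*$ with no tie. This yields $\hat\tree_j \to \tree^*$ almost surely and establishes the claim.
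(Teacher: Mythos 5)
Your proposal is correct and follows essentially the same route as the paper's proof: characterize the WSB* survivors as exactly the split-saturated loci via the $(1/L)^L$ resampling probability and the $\frac{1}{3}$ tie-breaking bound, observe that binning and supergene estimation preserve the distribution of splits among survivors, and then compare the orders in $\rho$ of the three saturation probabilities and apply the strong law of large numbers. The only quibble is your claim that the one-bad-site boundary support equals \emph{exactly} $1-\frac{2}{3}(1/L)^L$ (which would require showing every other resample deterministically returns $S$); the paper, like the part of your argument that actually matters, only needs the one-sided bound that the support is at most this value.
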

\noindent While the claims above are established under the multi-locus CFN model with a single tree, we show in the Appendix that these results also apply to the MSC+CFN model by choosing a species tree which is highly likely to produce gene trees matching the
species tree. 

\paragraph{Analysis of partitioned ML} 
We describe the main ideas 
used to prove Claim~\ref{claim:partML}.
We proceed as follows:
\begin{enumerate}
	\item[(a)] By choosing $\rho$ small enough,
	we show that we can restrict the analysis
	to the five most common dataset types,
	which we refer to as {\em locus patterns}.
	
	\item[(b)] We then show that, for these locus
	patterns, the likelihood on $\tree^*$ dominates the likelihood on $\tree^0, \tree^1$, and that this domination is strict in one case.
\end{enumerate}
Under our choice of branch lengths, as $\rho \to 0$, the five most common locus patterns, which we refer to as {\em dominant} (see Lemma~\ref{lem:key1} below for justification), are:
\begin{enumerate}
	\item {\it All constant sites:} Every character has the same state on all four taxa, but that state can change from one character to another (e.g. $x^a=x^b=x^c=x^d=0001010$). We let $\patterns_0$ be the set of such datasets and we let $Q_0$ be the probability of observing any $x \in \patterns_0$ under $(\tree^0,\Lambda^0)$.
	
	\item {\it One singleton site on $a$ or $c$:} All sites are constant except for one, on which either $a$ or $c$ is different from all others (e.g. $x^a=0111110$, $x^b=x^c=x^d=1111110$).  We let $\patterns_{11}$ be the set of such datasets and we let $Q_{11}$ be the probability of observing any $x \in \patterns_{11}$ under $(\tree^0,\Lambda^0)$.
	
	\item {\it Two identical singleton sites  on $a$ or $c$:} All sites are constant except for two, each of which has the same taxon $a$ or $c$ different from the others (e.g. $x^a=0011110$, $x^b=x^c=x^d=1111110$).  We let $\patterns_{2=}$ be the set of such datasets and we let $Q_{2=}$ be the probability of observing any $x \in \patterns_{2=}$ under $(\tree^0,\Lambda^0)$.
	
	\item {\it Two different singleton sites  on $a$ and $c$:} All sites are constant except for two, one of which has a different character state on $a$ and the other a different character state on $c$ (e.g. $x^a=1001110$, $x^c=0101110$, $x^b=x^d=0001110$). We let $\patterns_{2\neq}$ be the set of such datasets and we let $Q_{2\neq}$ be the probability of observing any $x \in \patterns_{2\neq}$ under $(\tree^0,\Lambda^0)$.
	
	\item {\it One site with a $2/2$-split $ac|bd$:} $L-1$ sites are constant with a single site having $a$ and $c$ different from $b$ and $d$ (e.g. $x^a=x^c=1001110$, $x^b=x^d=0001110$). We let $\patterns_{12}$ be the set of such datasets and we let $Q_{12}$ be the probability of observing any $x \in \patterns_{12}$ under $(\tree^0,\Lambda^0)$.
\end{enumerate}
\noindent Note that above only the last pattern is informative and it supports the split in $\tree^*$ rather than $\tree^0$.
Let $\widetilde{\patterns}$ be the set of all remaining locus patterns.
\begin{lem}[Dominant patterns and their likelihood contributions]\label{lem:key1}
	\hfill
	\begin{enumerate}
		\item[(a)]
		The probabilities of observing the dominant locus patterns are bounded as follows:
		\begin{equation*}
		Q_0 =\left(\frac{1}{2}\right)^L - \bigo(\rho),
		\quad Q_{11}=\bigo(\rho),
		\quad Q_{2=}=\bigo(\rho^2),
		\end{equation*}
		\begin{equation*}
		\quad Q_{2 \neq}=\bigo(\rho^2) \text{ and }
		Q_{12}=  \left(\frac{1}{2}\right)^L \rho^2 + \bigo(\rho^3).
		\end{equation*}
		Moreover, for all $x \in \widetilde{\patterns}$, the probability of observing $x$ under the CFN model on $(\tree^0,\Lambda^0)$ is $\bigo(\rho^3)$.
		
		\item[(b)]              For all $x \in \patterns_0 \cup \patterns_{11} \cup \patterns_{2=} \cup \patterns_{2\neq}$, it holds that
		\begin{equation*}
		\sup_\Lambda \ell(\tree^*,\Lambda,x) - \sup_\Lambda \ell(\tree^0,\Lambda,x) \geq 0,
		\end{equation*}
		while, for all $x \in \patterns_{12}$,
		\begin{equation*}
		\sup_\Lambda \ell(\tree^*,\Lambda,x) - \sup_\Lambda \ell(\tree^0,\Lambda,x) \geq K_{12} > 0,
		\end{equation*}
		for some positive constant $K_{12}$ depending only
		on $L$. The same holds if one replaces $\tree^0$ with $\tree^1$ above.
	\end{enumerate}
	
\end{lem}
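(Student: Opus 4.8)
The plan is to prove parts (a) and (b) by different means: (a) by a leading-order expansion in $\rho$, and (b) by a symmetry reduction that turns every comparison into a statement on a single fixed topology. For part (a), I would first compute, to leading order in $\rho$, the probability of each of the eight single-column CFN pattern classes on $(\tree^0,\Lambda^0)$: a constant column has probability $\tfrac12(1+\bigo(\rho))$; an $a$- or $c$-singleton $\tfrac12\rho(1+\bigo(\rho))$ (one change on a long pendant edge); the $ac|bd$ split $\tfrac12\rho^2(1+\bigo(\rho))$ (two changes, on the long pendant edges $a$ and $c$); and the remaining classes ($b$- and $d$-singletons, and the $ab|cd$ and $ad|bc$ splits) at order $\rho^3$ or higher, since realizing them cheaply forces a change on a short edge. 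Because a locus probability factorizes over its $L$ columns and equals $(\tfrac12)^L\rho^{\,o(x)}(1+\bigo(\rho))$, where $o(x)$ is the sum of the per-column orders, I would then enumerate all locus patterns with $o(x)\le 2$: these are exactly $\patterns_0$ (order $0$), $\patterns_{11}$ (order $1$), and $\patterns_{2=},\patterns_{2\neq},\patterns_{12}$ (order $2$), while every pattern in $\widetilde\patterns$ has $o(x)\ge 3$, i.e.\ probability $\bigo(\rho^3)$. Reading off leading coefficients gives the stated values, in particular $Q_0=(\tfrac12)^L-\bigo(\rho)$ and $Q_{12}=(\tfrac12)^L\rho^2+\bigo(\rho^3)$.

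For part (b), the organizing device is the relabeling identity $\sup_\Lambda \ell(\tree^*,\Lambda,x)=\sup_\Lambda\ell(\tree^0,\Lambda,\sigma x)$, where $\sigma$ is the transposition $(b\,c)$ (which sends $\tree^0=ab|cd$ to $\tree^*=ac|bd$) and $\sigma x$ swaps the $b$ and $c$ rows of $x$; this holds since $\ell$ is invariant under simultaneously relabeling taxa and branch lengths, and the supremum absorbs the induced permutation of $\Lambda$. Since the maximized likelihood depends on $x$ only through its vector of pattern-class counts, and since the automorphism group of $\tree^0$ acts transitively on its four leaves, each dataset in $\patterns_0$, $\patterns_{11}$, $\patterns_{2=}$ is carried by $\sigma$ either to a dataset with the identical count vector or to one with the same maximized likelihood on $\tree^0$, giving difference exactly $0$. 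The pair $\{a,c\}$ appearing in $\patterns_{2\neq}$ is cross-cherry and hence not in the automorphism orbit of a within-cherry pair, so here I would instead invoke the fact that a dataset of only constant and singleton columns carries no compatible-split signal: its internal edge is contracted at the likelihood optimum, so its maximized likelihood on any binary topology equals the (manifestly topology-independent) star-tree value, again yielding difference $0$. The same reductions with the transposition carrying $\tree^*$ to $\tree^1$ dispose of the $\tree^1$ comparisons.

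The one strict comparison, for $\patterns_{12}$, I would also route through the single-topology picture: by the same identity $\sup_\Lambda\ell(\tree^0,\Lambda,x)=\sup_\Lambda\ell(\tree^*,\Lambda,\sigma x)$, and $\sigma$ turns the $ac|bd$ split column of $x$ into an $ab|cd$ split column, which is \emph{incompatible} with $\tree^*$. Thus it suffices to show, on the fixed topology $\tree^*$, that a dataset of $L-1$ constants plus a split column \emph{matching} the internal edge beats one whose split column is incompatible, by a margin $K_{12}>0$ depending only on $L$. For the compatible side I would exhibit an explicit feasible choice—contract all four pendant edges and set the internal change probability to $1/L$—achieving likelihood at least $(\tfrac12)^L(1-\tfrac1L)^{L-1}\tfrac1L$. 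For the incompatible side I would establish a strictly smaller upper bound by exploiting the tension between the two requirements: producing an incompatible $2/2$ split on $\tree^*$ forces two long pendant edges, which necessarily depresses the probability of the $L-1$ constant columns, so no branch-length setting can reach the compatible value. The resulting gap is the required $K_{12}$, and the analogous argument against $\tree^1$ follows by relabeling.

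I expect the main obstacle to be the two optimization facts I have isolated as single-topology statements: (i) that the internal edge is contracted at the likelihood optimum for constant-plus-singleton data (needed for $\patterns_{2\neq}$), and (ii) the quantitative upper bound for the incompatible-split dataset (needed for the strict gap $K_{12}$). Both require controlling a joint maximization over branch lengths that are shared across columns, so the per-column factors cannot be optimized independently. I anticipate handling them either through the Hadamard/Fourier reparametrization of the CFN model—under which each transformed coordinate is a monomial in the edge parameters $\phi_e=e^{-2\lambda_e}$, making the dependence on the internal edge fully explicit—or by a direct monotonicity argument showing that increasing the internal change probability lowers the likelihood of the constant columns faster than it can raise that of a single incompatible split column.
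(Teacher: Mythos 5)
Your part (a) is essentially the paper's own argument: the same per-column order bookkeeping in $\rho$ (a singleton on $a$ or $c$ costs one change on a long pendant edge, the $ac|bd$ split costs two, and every other non-constant column forces either a change on a short edge or three changes on a long one, hence $\bigo(\rho^3)$), followed by factorization over the $L$ independent columns. For part (b) your organization is genuinely different: the paper never invokes the relabeling identity $\sup_\Lambda\ell(\tree^*,\Lambda,x)=\sup_\Lambda\ell(\tree^0,\Lambda,\sigma x)$ or the automorphism action, but instead bounds $\sup_\Lambda\ell(\tree^0,\Lambda,x)$ directly for each pattern class and exhibits a matching (or strictly larger) feasible value on $\tree^*$. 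Your symmetry reduction is valid and cleanly disposes of $\patterns_0,\patterns_{11},\patterns_{2=}$, and it correctly converts the $\patterns_{2\neq}$ and $\patterns_{12}$ comparisons into single-topology statements.

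The gap is that the two facts you defer --- (i) that constant-plus-singleton data attains its binary-tree optimum at the star-tree value, and (ii) a quantitative upper bound for a dataset whose split column is incompatible with the topology --- are not peripheral: they are the entire analytic content of part (b), and the proposal only names candidate tools for them. The paper discharges both with a single device you should adopt. Bound the probability of each column's pattern from above by the probability of a sub-event that constrains only agreement/disagreement along the two \emph{disjoint} paths $a$--$b$ and $c$--$d$ of $\tree^0$ (e.g.\ $\{x^a_1\ne x^b_1,\ x^c_1\ne x^d_1\}$ for a split column, $\{x^a_i= x^b_i,\ x^c_i= x^d_i\}$ for a constant column). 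By the Markov property these events are independent and depend only on the path parameters $\phi_{ab}=\phi_a\phi_b$ and $\phi_{cd}=\phi_c\phi_d$, so the joint maximization over five shared branch lengths collapses into decoupled one-dimensional problems of the form $\max_{\phi}\ \frac{1-\phi}{2}\bigl(\frac{1+\phi}{2}\bigr)^{L-1}$, solved at $\phi=\frac{L-2}{L}$. The resulting bound is attained by a star-tree-like assignment (which yields (i) and all the zero differences, including $\patterns_{2\neq}$), while for $\patterns_{12}$ the compatible topology achieves the strictly larger value obtained by contracting all pendant edges and putting the single required change on the internal edge, giving the explicit gap $K_{12}=-\log\frac{1}{L}-(L-1)\log\bigl(1-\frac{1}{L}\bigr)$. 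Your Hadamard or monotonicity routes may well succeed, but each requires a genuine multivariate optimization argument that the proposal does not supply; until one is carried out, the proof is incomplete at exactly the two points you flagged.
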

\noindent Note that the big-O notation implicitly includes the contribution from $L$, which we treat as a constant. The detailed proofs of Lemma~\ref{lem:key1} and Claim~\ref{claim:partML} are provided in the Appendix. 
Claim~\ref{claim:reas} follows from a similar argument, which is also detailed in the Appendix. 

\paragraph*{Analysis of WSB* pipeline}
Our analysis of the WSB* pipeline follows
along similar lines. 
Our key additional observation is that, by choosing an appropriate bootstrap
threshold, we ensure that 
the only loci passed on to the summary method are ``saturated,'' that is all their sites correspond to an equivalent character. The rest of the analysis is similar to Claim~\ref{claim:reas} and relies on the fact that the loci passed on to the summary method are dominated by the ``wrong split.''
Formally, we say that two characters are equivalent if they are identical up to switching 0s and 1s.
We say that a locus pattern $x$ is {\it saturated} if all characters in $x$ are equivalent. On four taxa, there are only
three types of saturated patterns:
\begin{enumerate}
	\item {\it All-constant:} Every character has the same value on all four taxa (e.g. $x^a=x^b=x^c=x^d=0001010$). We let $\patsat_0$ be the set of such datasets and we let $\qsat_0$ be the probability of observing any $x \in \patsat_0$ under $(\tree^0,\Lambda^0)$.
	
	\item {\it All-singleton on a fixed taxon:} All sites have the same taxon different from all others (e.g. $x^a=0101111$, $x^b=x^c=x^d=1010000$).  We let $\patsat_{1}$ be the set of such datasets and we let $\qsat_{1}$ be the probability of observing any $x \in \patsat_{1}$ under $(\tree^0,\Lambda^0)$.

	\item {\it All-$2/2$-split with a fixed split:} All sites have two fixed taxa---say, $a$ and $c$---identical while being different from the other two taxa---$b$ and $d$---(e.g. $x^a=x^c=1010111$, $x^b=x^d=0101000$). We let $\patsat_{ac|bd}$ be the set of such datasets 
	for the split $ac|bd$ and we let $\qsat_{ac|bd}$ be the probability of observing any $x \in \patsat_{ac|bd}$ under $(\tree^0,\Lambda^0)$ (and similarly for the other possible splits). For short, we refer to this type of datasets as {\em split-saturated genes}.
\end{enumerate}
\begin{lem}[Saturated genes]\label{lem:key2}
	\hfill
	\begin{enumerate}
		\item[(a)] Under the WSB* pipeline
		with threshold $B  \geq 1 - \frac{2}{3}\left(\frac{1}{L}\right)^L$,
		the only length-$L$ locus sequences passed on to the summary method are the ones in $\patsat_{ac|bd}$,
		$\patsat_{ab|cd}$ and $\patsat_{ad|bc}$. Moreover,
		$$
		\qsat_{ac|bd} =
		\left(\frac{1}{2}\right)^{L} \rho^{2L} + \bigo(\rho^{2L+1}),
		$$
		while
		$$
		\qsat_{ab|cd}
		= \bigo(\rho^{3L}),
		\quad
		\qsat_{ad|bc}
		= \bigo(\rho^{3L}).
		$$
		
		\item[(b)] For any $x \in \patsat_{ab|cd}$,
		the topology $ab|cd$ is the unique ML optimizer. And similarly for the other splits.
	\end{enumerate}
\end{lem}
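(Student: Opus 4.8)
The plan is to establish part (b) first, since the filtering statement in part (a) depends on it, and then to read off the asymptotics of the surviving classes from independent single-site computations.

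For part (b) I would exploit the $0\leftrightarrow 1$ flip symmetry of the CFN model. If every character of $x$ is equivalent to the same split $ab|cd$, then (since $\prob_{\tree,\Lambda}$ is invariant under flipping) every site contributes the same factor to the likelihood on any topology $\tree$, so $\sup_\Lambda \ell(\tree,\Lambda,x)=L\log\pi_\tree$, where $\pi_\tree:=\max_\Lambda \prob_{\tree,\Lambda}[ab|cd]$ is the largest probability the model on $\tree$ can assign to a single $ab|cd$-character. Thus comparing topologies reduces to comparing one number. On $\tree^0=ab|cd$ the split sits on the internal edge and can be produced by a single substitution, so $\pi_{\tree^0}$ is attained at a strictly positive value; on either incompatible topology $\tree^*=ac|bd$ or $\tree^1=ad|bc$ the split is not displayed and can only be produced by substitutions on at least two edges, forcing $\pi_{\tree^*},\pi_{\tree^1}<\pi_{\tree^0}$ and hence making $\tree^0$ the unique ML optimizer. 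I would make the strict inequality rigorous in Hadamard (Fourier) coordinates, where the three internal splits correspond to three distinct multiplicative parameters and the pure-split data makes the sign of the gap explicit. This strictness is the step I expect to be the main obstacle: the rest is combinatorial or a routine expansion, whereas here one must rule out an exact tie between the correct topology and an incompatible one after optimizing over all branch lengths.

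For part (a) I would work with the exact bootstrap support, i.e.\ the probability, over a single resample of $L$ sites with replacement, that the ML gene tree contains a given internal edge. If $x\in\patsat_{ac|bd}$ (or the analogous set for another split $s$), every resample again consists solely of $s$-characters, so by part (b) its ML tree is uniquely the $s$-topology; hence the $s$-edge has support exactly $1>B$ and the locus survives. Conversely, suppose $x$ is not split-saturated. Then for every split $s$ there is a site $\chi^*$ of $x$ that is not an $s$-character, and the resample consisting of $L$ copies of $\chi^*$ (which occurs with probability $(1/L)^L$) is either uninformative---if $\chi^*$ is constant or a singleton, all three topologies tie and $s$ is recovered with probability at most $1/3$---or shows a different split $s'\ne s$, in which case by part (b) the $s$-edge is recovered with probability $0$. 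Either way,
\begin{equation*}
\mathrm{supp}(s)\;\le\;\Bigl(1-\bigl(\tfrac1L\bigr)^L\Bigr)+\tfrac13\bigl(\tfrac1L\bigr)^L\;=\;1-\tfrac23\bigl(\tfrac1L\bigr)^L\;\le\;B ,
\end{equation*}
so no internal edge clears the threshold and the locus is discarded. This shows precisely the split-saturated loci survive; the threshold $B\ge 1-\frac23(1/L)^L$ is calibrated exactly so that a single off-split site suffices to kill the support.

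Finally I would compute the leading orders. Because sites are i.i.d.\ under the CFN model on $(\tree^0,\Lambda^0)$, the probability of any fixed representative in $\patsat_s$ factors as the $L$-th power of a single-site split probability. A direct expansion on $\tree^0=ab|cd$ with $p^0_a=p^0_c=\rho$ and $p^0_b=p^0_d=p^0_m=\rho^3$ gives, for a single site, probability $\tfrac12\rho^2+\bigo(\rho^3)$ for each orientation of an $ac|bd$ split---generated most cheaply by a substitution on each of the two long pendant edges $a$ and $c$---$\tfrac12\rho^3+\bigo(\rho^4)$ for an $ab|cd$ split (requiring a substitution on the short middle edge), and $\bigo(\rho^4)$ for an $ad|bc$ split. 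Taking $L$-th powers yields $\qsat_{ac|bd}=(1/2)^L\rho^{2L}+\bigo(\rho^{2L+1})$ together with $\qsat_{ab|cd},\qsat_{ad|bc}=\bigo(\rho^{3L})$, as claimed. The resulting separation $\rho^{2L}\gg\rho^{3L}$ is exactly what will later force the loci passed to the summary method to be dominated by the wrong split $\tree^*=ac|bd$.
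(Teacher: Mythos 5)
Your proposal is correct and follows essentially the same route as the paper: the same bootstrap-resampling argument with the $\left(\frac{1}{L}\right)^L$ probability of drawing $L$ copies of a single off-split or uninformative site to calibrate the threshold $1-\frac{2}{3}\left(\frac{1}{L}\right)^L$, and the same reduction of $\qsat_{ac|bd}$, $\qsat_{ab|cd}$, $\qsat_{ad|bc}$ to $L$-th powers of single-site probabilities. The only cosmetic difference is that for the strict ML comparison in part (b) you propose Hadamard coordinates, whereas the paper reuses the explicit single-site bound from Lemma~\ref{lem:key1}(b) (case 5 with $L=1$); both yield the same gap (sup-likelihood $\tfrac14$ for the matching topology versus at most $\tfrac18$ for an incompatible one).
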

\noindent The detailed proofs of Lemma~\ref{lem:key2} and Claim~\ref{claim:wsb*} are provided in the Appendix.

\section*{Discussion}

Our results show that fully partitioned maximum
likelihood is inconsistent (even positively misleading)
even when there is no gene tree heterogeneity at all (i.e., when all loci evolve down a common CFN model tree), and hence by continuity 
under the multi-locus MSC+CFN model.
The inconsistency result occurs because each locus has at most $L$ sites (for an arbitrarily selected bound $L$), and the loci all evolve down gene trees that have long branch attraction (LBA).
It is well known that maximum likelihood is statistically consistent even in the presence of LBA, but our results show that LBA is sufficient to bias fully partitioned ML towards the same wrong tree on each locus, and hence towards the same wrong tree for the partitioned concatenation analysis. 

The same argument is used to establish that reasonable summary methods  and weighted statistical binning pipelines that use these reasonable summary methods can be positively misleading when each locus has only $L$ sites, even when there is no gene tree heterogeneity.
Hence, summary methods and weighted statistical binning pipelines do not solve this challenge, either.
All the methods we addressed in this study can be seen as partitioned analyses -- partitioned maximum likelihood estimates numeric parameters for each locus but keeps the tree topology the same across the loci, and summary methods estimate the gene trees independently across the loci. 

The fundamental challenge to multi-locus species tree estimation using these partitioned analyses (whether partitioned maximum likelihood or summary methods) is that  maximum likelihood tree estimation is impacted by conditions such as LBA when the number of sites is not allowed to increase. 

It is interesting to consider unpartitioned maximum likelihood under the same set of conditions.
When all the loci evolve down the same CFN model tree,  even though each locus has only $L$ sites, as the number of loci increases, the unpartitioned maximum likelihood analysis will converge to the true tree; thus, 
unpartitioned maximum likelihood analysis is consistent under this setting.
On the other hand, when there is gene tree heterogeneity resulting from ILS (as modelled by the MSC), then unpartitioned ML is inconsistent and can be positively misleading \cite{RochSteel2015}.
Hence, unpartitioned maximum likelihood can be statistically consistent under one setting and inconsistent (and even positively misleading) under another. 
In other words, unpartitioned maximum likelihood is not the solution to the challenge raised by this study.


Our analysis does not apply to multilocus methods that estimate the species tree directly from 
sequence data---without a gene tree reconstruction step. These include for instance METAL~\cite{Dasarathy2015}, SNAPP~\cite{bryant2012inferring},
SVDquartets~\cite{ChifmanKubatko:14,chifman2015identifiability}, and *BEAST \cite{heled2009bayesian}.
In particular, METAL has been shown to be consistent on finite-length genes under some assumptions on the multispecies coalescent~\cite{Dasarathy2015}.
It is also worthwhile pointing out that our results, while being based on the MSC, are likely to hold more generally for other sources of gene tree discordance, including horizontal gene transfer (HGT). Indeed, as long as rates of HGT are low enough, in the Felsenstein zone similar conclusions about inconsistency will follow for partitioned ML and summary-based methods.

\section*{Conclusion}
Prior to this study, many
coalescent-based species tree estimation methods were assumed to
be statistically consistent under this regime, but no
proofs had been provided.
This study now establishes that all the
standard methods used in phylogenomic species tree estimation
are statistically inconsistent.

Moreover, only a very small number of methods have been proven
to be statistically consistent for bounded $L$.
Some of the summary methods described
in \cite{RochWarnow2015} are statistically consistent
for $L=1$, but the proofs depend on the strict molecular clock.
Similarly, SVDquartets \cite{SVDquartets} (a site-based method for estimating quartet trees from a single site per locus) is based on an identifiability result that depends on the strict molecular clock; however, the species tree estimation method itself has not yet established to be statistically consistent
under the MSC even when the strict molecular clock holds.

Furthermore, when the strict molecular clock
assumption does not hold,  very few methods
are statistically consistent for bounded $L$.
METAL \cite{Dasarathy2015} is one of the few coalescent-based methods that does not require a molecular clock, and that has been proven to be statistically consistent under the MSC+CFN model. It should be noted however that the model of evolution in \cite{Dasarathy2015} allows mutation rates to vary across branches of the species tree, but those rates must be the same across loci, a major constraint.
Much remains to be understood about the important theoretical question of fixed locus length consistency of multilocus method in general.

\section*{Acknowledgments}

This work was supported by funding from the U.S. National Science Foundation CCF-1535977 (to TW).
SR was supported by NSF grants DMS-1149312 (CAREER), DMS-1614242 and CCF-1740707 (TRIPODS).

\bibliographystyle{plain}
\bibliography{nute-roch-warnow}

\appendix

\section{Inconsistency of WSB on a single site}
\label{sec:app-wsb}

Here we show that the weighted statistical binning pipeline
(as defined in \cite{WSB})
is {\em inconsistent} for any   positive $L$, for
some four-species model tree. 
We begin with a lemma.
\begin{lem}
	Let $\stree$ be a model species tree with four species $a,b,c,d$, and
	suppose every locus has only one site.
	In a weighted statistical binning pipeline with bootstrap
	support threshold $B \geq \frac{1}{3}$, 
	there will at most three bins (one for each of the three
	possible binary topologies on four leaves), and
	the bin associated with topology $ab|cd$ will have
	all the ML-informative genes that support $ab|cd$.
	\label{lemma1}
\end{lem}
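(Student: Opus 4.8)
The plan is to reduce the lemma to an analysis of the incompatibility graph built on single-site loci, and then to read off the binning from its very rigid structure. First I would classify the possible single-site loci on $\{a,b,c,d\}$ under the CFN model. Up to the symmetry $0\leftrightarrow 1$, a single site falls into exactly three categories: the constant pattern, one of four singleton patterns (one taxon differing), and one of three $2/2$-split patterns. For each, I would record its ML gene tree and the bootstrap support of its internal edge. The key observations are that a $2/2$-split locus, say with $a,b$ in one state and $c,d$ in the other, has $ab|cd$ as its unique resolved ML topology, whereas constant and singleton loci are ML-uninformative, the three topologies being tied. Since each locus is a single column, every bootstrap replicate reproduces that column verbatim; thus an informative locus exhibits its internal edge with support exactly $1$, while an uninformative locus resolves at random (or not at all) and each of its internal edges carries support at most $\tfrac13$.

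Second I would translate this into the incompatibility graph on the loci. Because the threshold satisfies $B\geq\tfrac13$, the support-$\tfrac13$ edges of uninformative loci are never strictly above $B$, so uninformative loci can never be declared incompatible with anything and appear as isolated vertices. Among informative loci only the support-$1$ internal edges matter (strictly above $B$ whenever $B<1$; the case $B=1$ yields no edges at all and a single bin, which is trivial), and two such loci conflict precisely when they support different quartet topologies. Hence the incompatibility graph is a complete multipartite graph whose parts are the classes $\G_{ab|cd},\G_{ac|bd},\G_{ad|bc}$ of informative loci (fewer represented topologies giving fewer parts), together with isolated uninformative vertices.

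Third I would extract the conclusion from this structure. A clique in a complete multipartite graph contains at most one vertex per part, so its clique number is at most three; dually, coloring each part with its own color and folding the isolated uninformative vertices into any color is a proper coloring with at most three colors. Since the WSB binning step seeks a proper coloring minimizing the number of bins \cite{WSB}, it produces at most three bins. To see that all informative loci supporting $ab|cd$ land in a \emph{single} bin, I would invoke the rigidity of minimum colorings of complete multipartite graphs: if a minimum coloring split any part across two colors, every vertex of each other nonempty part would be adjacent to both colors and thus forced onto a further color, contradicting minimality. Hence each part is monochromatic, and the color class containing $\G_{ab|cd}$ is exactly the bin associated with $ab|cd$ and holds all of its ML-informative loci.

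I expect the main obstacle to be the faithful treatment of bootstrap support on a one-column alignment---in particular pinning down that uninformative loci carry support at most $\tfrac13$, which is precisely what the hypothesis $B\geq\tfrac13$ is calibrated to neutralize---together with justifying that the binning routine of \cite{WSB} realizes the minimum number of colors, so that the complete-multipartite rigidity applies. Once these modeling points are secured, the graph-theoretic core is routine.
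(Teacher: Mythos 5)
Your proposal is correct and follows essentially the same route as the paper: ML-informative single-site loci carry bootstrap support $1$ and ML-uninformative ones support $\tfrac{1}{3}\leq B$, so the incompatibility graph is a complete multipartite graph on the informative loci together with isolated uninformative vertices, and a minimum vertex coloring yields at most three bins, one per topology. The only difference is that you explicitly justify (via the rigidity of minimum colorings of complete multipartite graphs) why each part stays monochromatic, a step the paper asserts without detailed proof.
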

\begin{proof}
	Because there is only one site for each gene, the ML-informative
	genes have bootstrap support of 100\%. Hence, no
	two ML-informative genes can be placed in the same bin if
	they support different tree topologies. Therefore,
	for any bin, 
	the ML-informative genes placed in the bin will support the
	same topology.
	Also, the ML-uninformative genes produce trees with bootstrap
	support equal to $\frac{1}{3}$, since every tree topology has equal 
	maximum likelihood score. 
	These genes are therefore considered compatible with every other gene,
	since the bootstrap support threshold $B \geq \frac{1}{3}$.
	
	Since there are only three tree topologies,  the incompatibility graph 
	is the union of a complete 3-partite graph (defined by the ML-informative
	genes) and a collection of isolated vertices (defined by the ML-uninformative
	genes). 
	Hence, the incompatibility graph
	can be $3$-colored. 
	Since statistical binning seeks the minimum vertex coloring
	for the incompatibility graph, it will partition the
	genes into three bins, with one bin for each binary tree topology. 
	Hence, the ML-informative genes are partitioned into three sets
	based on the tree topology they support.
\end{proof}

We continue with an analysis of WSB pipelines followed
by reasonable summary methods, beginning with the case of
a single site per gene. The following
result implies Theorem~\ref{thm:incWSB}.
\begin{thm}
	Suppose every gene has only one site, and
	let $(\stree,\Gamma,\theta)$ be a MSC+CFN model species tree
	with leaves $a,b,c,d$.
	Let $B \geq \frac{1}{3}$. 
	If for all binary trees
	$t$ on $a,b,c,d$ the probability that a random
	gene is ML-informative and supports $t$
	is at most $\frac{1}{3}$, then 
	weighted statistical binning followed
	by a reasonable summary method will be statistically
	inconsistent.
	\label{thm:wsb-one-site}
\end{thm}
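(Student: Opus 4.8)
The plan is to reduce the $L=1$ WSB pipeline to a counting argument over the three quartet topologies and to show that, under the stated hypothesis, the multiset of replicated supergene trees handed to the summary method becomes asymptotically uniform over the three topologies, so that no reasonable summary method can converge to the true tree.

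First I would classify single-site genes. Up to interchanging the two CFN states, a single site on $\{a,b,c,d\}$ is constant, a singleton (one taxon differs), or a $2/2$-split; only the last is ML-informative, and a $2/2$-split for the split $t$ has $t$ as its unique ML topology. Since every bootstrap resample of a one-site gene reproduces that same site, an informative gene has bootstrap support $1$ on its internal edge, while a constant or singleton gene yields a three-way tie and hence support exactly $\tfrac13$ on each topology. By Lemma~\ref{lemma1}, with $B \ge \tfrac13$ the incompatibility graph is a complete tripartite graph on the informative genes (the three parts indexed by the supported split) together with the uninformative genes as isolated vertices; the minimum coloring produces at most three bins, the informative genes supporting $t$ all lie in the bin for $t$, and the isolated (uninformative) genes may be assigned to any bin, so the balancing step of statistical binning is free to equalize the bin sizes.

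Next I would invoke the law of large numbers. Writing $p_t$ for the probability that a random gene is ML-informative and supports $t$, and $p_0 = 1-\sum_t p_t$ for the probability it is uninformative, the a.s. gene counts are $\approx m p_t$ informative genes for split $t$ and $\approx m p_0$ uninformative genes. The hypothesis $p_t \le \tfrac13$ for every $t$ is precisely the condition under which the $\approx m p_0$ isolated genes suffice to raise each of the three bins to the common size $\approx m/3$, since the total deficit to be filled is $\sum_t(\tfrac m3 - m p_t) = m p_0$. Hence the balanced binning makes all three bins asymptotically equal in size. As $p_t>0$ under the MSC+CFN model, each bin contains at least one informative gene, so its fully partitioned ML supergene tree has topology $t$ (the $2/2$-split genes pin the split, while constant and singleton genes contribute equal likelihood to all three topologies and cannot break the tie). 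Replicating each supergene tree by its bin size therefore feeds $\mathcal{A}$ a multiset in which each of $ab|cd$, $ac|bd$, $ad|bc$ occurs with frequency tending to $\tfrac13$.

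Finally, a flat input distribution defeats any reasonable summary method: in the limiting three-way tie, $\mathcal{A}$ breaks ties uniformly at random and so returns the true species topology with probability tending to $\tfrac13 < 1$, and thus fails to converge to the true tree with probability $1$. This yields inconsistency and in particular implies Theorem~\ref{thm:incWSB}. The main obstacle is making the ``flat'' conclusion rigorous at finite $m$: the informative counts fluctuate on the $\sqrt m$ scale, so I must argue that the balancing step can and does equalize the bins up to these fluctuations, and must treat the boundary case $p_t = \tfrac13$ separately—there the bin for $t$ already has size $\approx m/3$ and the winner is decided by CLT-scale fluctuations—in such a way that the probability of returning the true tree provably stays bounded away from $1$.
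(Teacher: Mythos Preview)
Your proposal is correct and follows essentially the same route as the paper: classify single-site genes as informative ($2/2$-splits) versus uninformative, invoke Lemma~\ref{lemma1} to get three bins indexed by topology with the uninformative genes free to be distributed, use the hypothesis $p_t\le\tfrac13$ to argue the balancing step can equalize the bins, identify the supergene tree on bin $t$ as $t$, and conclude that the replicated output is asymptotically flat so a reasonable summary method cannot converge.

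If anything, you are more careful than the paper on two points. First, the paper asserts that the supergene alignment for bin $t$ ``will consist of sites that all split $t$,'' which is literally false once uninformative genes are placed there; your justification (the $2/2$-splits pin the topology while constant/singleton sites contribute equally to all three and cannot break the tie) is the right way to reach the same conclusion. Second, your explicit deficit computation $\sum_t(\tfrac{m}{3}-mp_t)=mp_0$ makes precise what the paper leaves as ``if the probability of being ML-uninformative is sufficiently high.'' The obstacles you flag at the end---$\sqrt{m}$-scale fluctuations and the boundary case $p_t=\tfrac13$---are real but are glossed over in the paper as well; for the purpose of matching the paper's level of rigor you need not resolve them.
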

\begin{proof}
	The argument will establish that 
	under the conditions of the theorem, as
	the number of genes increases, the WSB binning
	process will converge to a flat 
	distribution on the three possible
	tree topologies on $a,b,c,d$, so that
	any reasonable summary method will be
	inconsistent.

	By Lemma \ref{lemma1},
	in a weighted statistical binning
	pipeline, there will be three bins (one for
	each binary tree topology), and
	the bin for binary tree $t$ will have all genes that
	are ML-informative and
	support the split for $t$, and may also have
	ML-uninformative genes.
	Furthermore, the ML-uninformative genes can be distributed to
	the bins arbitrarily, since their bootstrap support is
	exactly $\frac{1}{3}$ and $B \geq \frac{1}{3}$.

	Since every gene has only one site, the supergene alignment
	associated to the bin for $ab|cd$ will consist of sites
	that all split $ab|cd$. Hence, 
	when a fully partitioned ML analysis is applied to
	the bin for $t$, the resultant supergene tree will be
	the tree $t$. In a WSB pipeline, the supergene trees for
	each bin will be replicated as many times as the number of
	genes in the bin for $t$.
	These trees are the newly computed gene trees that will
	be passed to the reasonable summary method.
	
	The division of genes into bins attempts to achieve balanced bins,
	so that the number of genes in each bin should be as close to the same as 
	possible. Therefore, if the probability that a gene is
	ML-uninformative
	is sufficiently high, then it will be possible to achieve balanced bins,
	and the distribution of newly computed
	gene trees  will converge to the 
	flat distribution.
	Since reasonable summary methods cannot infer the species tree
	from flat distributions, this means that when
	the probability of being ML-uninformative is sufficiently high, then
	WSB pipelines based on reasonable summary methods will not be
	statistically
	consistent.
\end{proof}

\section{Proofs of the main results}

We provide detailed proofs of the main claims.

\subsection*{Key lemmas}

\begin{proof}[Proof of Lemma~\ref{lem:key1}]
	(a) 	Under our choice of branch lengths, as $\mu \to 0$, the five most common locus site patterns are:
	\begin{enumerate}
		\item {\it All constant sites:} Every character has the same value on all four taxa (e.g. $x^a=x^b=x^c=x^d=000101$). For any such $x \in \patterns_0$, $x$ occurs with probability
		\begin{equation*}
		Q_0 
		= \left[\frac{1}{2}\left(1-\rho^3 \right)^3 \left(1-\rho \right)^2 +\mathcal{O}(\rho)\right]^L
		=\left(\frac{1}{2}\right)^L - \bigo(\rho),
		\end{equation*}
		where the first term in the brackets corresponds
		to the case of no substitution, while the second term accounts for all possibilities with at least
		one substitution. For convenience we denote the expression in brackets---the probability of a single site being identical on all four taxa---as $q_0$. 
		
		\item {\it One singleton site on $a$ or $c$:} All sites are constant except for one, on which either $a$ or $c$ is different from all others (e.g. $x^a=01\dots$, $x^b=x^c=x^d=11\dots$).  Any dataset with this locus site pattern occurs with probability
		\begin{equation*}
		Q_{11}=q_0^{L-1}\left[\frac{1}{2}(1-\rho^3 )^3 (1-\rho)\rho + \bigo(\rho^2)\right]=\bigo(\rho),
		\end{equation*}
		where the first term in the brackets corresponds 
		to the case of a single substitution along the
		edge leading to the differing taxon, while the second
		term accounts for all possibilities involving
		at least two substitutions.
		
		\item {\it Two identical singleton sites  on $a$ or $c$:} All sites are constant except for two, each of which has the same taxon $a$ or $c$ different from the others (e.g. $x^a=001\dots$, $x^b=x^c=x^d=111\dots$).  Any dataset with this locus site pattern occurs with probability
		\begin{equation*}
		Q_{2=}=q_0^{L-2}\left[\frac{1}{2}(1-\rho^3 )^3 (1-\rho)\rho + \bigo(\rho^2) \right]^2=\bigo(\rho^2),
		\end{equation*}
		which follows from the same computation as
		in the one singleton case.
		
		\item {\it Two different singleton sites  on $a$ and $c$:} All sites are constant except for two, one of which has a different character on $a$ and the other a different character on $c$ (e.g. $x^a=100\dots$, $x^c=010\dots$, $x^b=x^d=000\dots$). Any dataset with this locus site pattern occurs with probability
		\begin{equation*}
		Q_{2 \neq}=q_0^{L-2}\left[\frac{1}{2}(1-\rho^3 )^3 (1-\rho)\rho + \bigo(\rho^2) \right]^2=\bigo(\rho^2),
		\end{equation*}
		which follows from the same computation as
		in the one singleton case.
		
		\item {\it One site with a $2/2$-split $ac|bd$:} $L-1$ sites are constant with a single site having $a$ and $c$ different from $b$ and $d$ (e.g. $x^a=x^c=100\dots$, $x^b=x^d=000\dots$). Any dataset with this locus site pattern occurs with probability
		\begin{equation}
		\label{eq:q12}
		Q_{12}=q_0^{L-1}\left[\frac{1}{2}(1-\rho^3 )^3\rho^2
		+ \bigo(\rho^3) \right] =  \left(\frac{1}{2}\right)^L \rho^2 + \bigo(\rho^3),
		\end{equation}
		where the first term in the brackets corresponds 
		to the case of substitutions along the
		edges leading to the differing taxa, while the second
		term accounts for all possibilities with at least 
		one substitution along the other edges.
	\end{enumerate}
	
	Any remaining locus site pattern must include either a change along one of the short branches, which involves multiplication by $\rho^3$, or three changes along one of the long branches, which also means multiplication by $\rho^3$. Thus all $x$ in $\widetilde{\patterns}$ have probability $\mathcal{O}(\rho^3)$. That concludes the proof of the claim in (a).
	
	\medskip
	
	(b) 	It remains to prove (b).
	For each locus site pattern we will put an upper bound on the maximum of the likelihood function for topology $\tree^0=ab|cd$, and show that in every case the alternate topology $\tree^*=ac|bd$ has maximum likelihood greater than or equal to this upper bound, and in at least one case is strictly greater.

	Some remarks about notation first. Note that the labels we have used for the branch lengths of $\tree^0$ can be used similarly regardless of the topology of the tree: $\lambda_m$ represents the middle branch in any topology, and the others represent the branch leading to their respective taxon. Also we use $\Lambda$ and $\Phi$ interchangeably,
	where $\Phi$ is the corresponding collection of $\phi$-parameters as defined in~\eqref{eq:theta}.
	Finally we will use the following property of the $\phi$-parametrization~\cite{semple2003phylogenetics}: the $\phi$'s multiply along paths; indeed, we have for instance,
	\begin{align}
	&\P_{x \sim (\tree^0,\Phi^0)}[x^a_1 \neq x^b_1]\nonumber\\
	&= (1-p_a^0)p_b^0 + p_a^0(1- p_b^0)\nonumber\\
	&= \frac{1}{2}(1+\phi_a^0)\frac{1}{2}(1-\phi_b^0)
	+\frac{1}{2}(1-\phi_a^0)\frac{1}{2}(1+\phi_b^0)\nonumber\\
	&= \frac{1}{2}(1-\phi_a^0\phi_b^0).\label{eq:thetaMultiply}
	\end{align}
	
	Finally, because by inclusion the probability
	of observing $\chi_{\cdot 1}$ is at most the probability
	of observing $\chi_{a 1}$, which is simply $\left(\frac{1}{2}\right)^L$ by independence of the sites, we have
	\begin{equation}
	\label{logLLowerbound2-key}
	\sup_{\Lambda} \ell(\tree,\Lambda,\chi_{\cdot 1})
	\leq \log \left(\frac{1}{2}\right)^{L}
	= -L \log 2.
	\end{equation}
	
	We divide up the proof of by locus site pattern.
	\begin{enumerate}
		\item {\it All constant sites:} Recall from~\eqref{logLLowerbound2-key} that, for any $\tree$ (and, in particular, for $\tree^0$),
		$$\sup_\Lambda \ell(\tree,\Lambda,x) \leq -L\log 2.$$ For $x \in \patterns_0$, that can always (in particular, for $\tree^*$) be achieved by setting all branch lengths to $0$. 
		
		\item {\it One singleton site  on $a$ or $c$:} Without loss of generality, assume the non-constant site is site 1 and that it has $(x^a_1,x^b_1,x^c_1,x^d_1)=(1,0,0,0)$. Assume also
		that $(x^a_i,x^b_i,x^c_i,x^d_i)=(0,0,0,0)$ for all $i = 2,\ldots,L$. We can put the following upper bound on the likelihood function for $\tree^0$.
		Letting $\phi_{ab}=\ta\tb$
		and using~\eqref{eq:thetaMultiply}, we have
		\begin{eqnarray}
		&&\P_{x \sim (\tree^0,\Phi)}\left(x^a_1{=}1,x^b_1{=}x^c_1{=}x^d_1{=}0\right)\nonumber\\
		&&\qquad \times\P_{x \sim (\tree^0,\Phi)}\left(x^a_i{=}x^b_i{=}x^c_i{=}x^d_i{=}0\right)^{L-1}\nonumber\\
		&& \le \P_{x \sim (\tree^0,\Phi)}\left(x^a_1{=}1{\ne}x^b_1\right)\,\P_{x \sim (\tree^0,\Phi)}\left(x^a_i{=}0{=}x^b_i\right)^{L-1}\nonumber\\
		&& \le \frac{1}{2}\left(\frac{1-\phi_{ab}}{2}\right)\left[\frac{1}{2}\left(\frac{1+\phi_{ab}}{2}\right)\right]^{L-1},\label{eq:pattern2Upper}
		\end{eqnarray}
		where the first inequality follows by inclusion.
		To derive our upper bound, we maximize the expression on the last line as a function of $\phi_{ab}$. Taking the log, differentiating and equating to $0$,
		we get
		\begin{equation*}
		\frac{-1}{1-\phi_{ab}}+(L-1)\frac{1}{1+\phi_{ab}}=0
		\end{equation*}
		that is, 
		$\phi_{ab}=\frac{L-2}{L}$.
		Plugging this back above, we get the upper bound 
		\begin{align*}
		&\sup_\Phi \ell(\tree^0,\Phi,x)\nonumber\\
		&\le L \log\left(\frac{1}{2}\right) + \log \left(\frac{1}{L}\right) + (L-1) \log  \left(1-\frac{1}{L}\right).
		\end{align*}
		On the other hand, for $\tree^*$ (or, in fact, any topology), setting $\lambda_b=\lambda_c=\lambda_d=\lambda_m=0$ and $\lambda_a$ so that $p_a=\frac{1}{L}$, we get the matching bound
		\begin{eqnarray*}
			&&\P_{x \sim (\tree^*,\Phi)}\left(x^a_1{=}1,x^b_1{=}x^c_1{=}x^d_1{=}0\right)\nonumber\\
			&& \qquad\times\P_{x \sim (\tree^*,\Phi)}\left(x^a_i{=}x^b_i{=}x^c_i{=}x^d_i{=}0\right)^{L-1}\\
			&& = \frac{1}{2}\left(\frac{1}{L}\right)\left[\frac{1}{2}\left(1 - \frac{1}{L}\right)\right]^{L-1},
		\end{eqnarray*}
		which establishes the required lower bound
		on $\sup_\Phi \ell(\tree^*,\Phi,x)$.
		
		\item {\it Two identical singleton sites  on $a$ or $c$:}
		For this locus site pattern, the argument is identical to the previous locus site pattern, with the difference that the exponents in~\eqref{eq:pattern2Upper} are $2$ and $L-2$, and accordingly throughout, giving an optimal $\phi_{ab}$ of $\frac{L-4}{L}$ and the upper bound $L \log (1/2) + 2\log \left(\frac{2}{L}\right) + (L-2) \log \left(1-\frac{2}{L}\right)$. This can likewise be achieved with topology $\tree^*$ (or, in fact, any topology) if $\lambda_b=\lambda_c=\lambda_d=\lambda_m=0$ and $\lambda_a$ is set so that $p_a=\frac{2}{L}$. 
		
		\item {\it Two different singleton sites  on $a$ and $c$:}
		Assume that $(x^a_1,x^b_1,x^c_1,x^d_1)=(1,0,0,0)$, 
		$(x^a_1,x^b_1,x^c_1,x^d_1)=(0,0,1,0)$ and $(x^a_i,x^b_i,x^c_i,x^d_i)=(0,0,0,0)$ for all $i = 3,\ldots,L$, without loss of generality. (Recall that the case of two different singletons not involving $a$ and $c$ has negligible probability of being observed by part (a) and is therefore not considered here.) We will use the following
		property of the CFN model: on $\tree^0$, because
		the path joining $a,b$ and the path joining $c,d$ are disjoint, the event $\{x^c_1 = x^d_1\}$ is independent of the states $x^a_1$ and $x^b_1$. This is immediate by the symmetry of the CFN model and the Markov property~\cite{semple2003phylogenetics}. (Indeed, conditioning on the state at $f$ has no effect on the agreement between $c$ and $d$.) Using this fact as well as inclusion and~\eqref{eq:thetaMultiply}, we get
		\begin{eqnarray*}
			&&\P_{x \sim (\tree^0,\Phi)}\left(x^a_1{=}1,x^b_1{=}x^c_1{=}x^d_1{=}0\right)\\
			&&\qquad \times \P_{x \sim (\tree^0,\Phi)}\left(x^c_1{=}1,x^a_1{=}x^b_1{=}x^d_1{=}0\right)\\
			&& \qquad \times\P_{x \sim (\tree^0,\Phi)}\left(x^a_i{=}x^b_i{=}x^c_i{=}x^d_i{=}0\right)^{L-2}\\
			&& \le \P_{x \sim (\tree^0,\Phi)}\left(x^a_1{=}1{\ne}x^b_1,x^c_1{=}x^d_1\right)\\
			&& \qquad \times \P_{x \sim (\tree^0,\Phi)}\left(x^a_1{=}0{=}x^b_1,x^c_1{\ne}x^d_1\right)\\ 
			&& \qquad \times \P_{x \sim (\tree^0,\Phi)}\left(x^a_i{=}0{=}x^b_i,x^c_i{=}x^d_i\right)^{L-2}\\
			&& = [\P_{x \sim (\tree^0,\Phi)}\left(x^a_1{=}1{\ne}x^b_1\right)\,\P_{x \sim (\tree^0,\Phi)}\left(x^c_1{=}x^d_1\right)]\\
			&& \qquad \times [\P_{x \sim (\tree^0,\Phi)}\left(x^a_1{=}0{=}x^b_1\right)\,\P_{x \sim (\tree^0,\Phi)}\left(x^c_1{\ne}x^d_1\right)]\\ 
			&& \qquad \times [\P_{x \sim (\tree^0,\Phi)}\left(x^a_i{=}0{=}x^b_i\right)
			\, \P_{x \sim (\tree^0,\Phi)}\left(x^c_i{=}x^d_i\right)]^{L-2}\\
			&& = \frac{1}{2}\left(\frac{1-\phi_{ab}}{2}\right)
			\left(\frac{1+\phi_{cd}}{2}\right)\\
			&& \qquad \qquad  \times \frac{1}{2}\left(\frac{1+\phi_{ab}}{2}\right)
			\left(\frac{1-\phi_{cd}}{2}\right)\\
			&& \qquad \times \left[\frac{1}{2}\left(\frac{1+\phi_{ab}}{2}\right)\left(\frac{1+\phi_{cd}}{2}\right)\right]^{L-2}\\
			&& = \left(\frac{1}{2}\right)^L
			\left(\frac{1-\phi_{ab}}{2}\right)
			\left(\frac{1+\phi_{ab}}{2}\right)^{L-1}\nonumber\\
			&&\qquad \times \left(\frac{1-\phi_{cd}}{2}\right)
			\left(\frac{1+\phi_{cd}}{2}\right)^{L-1},
		\end{eqnarray*}	
		where $\phi_{ab} = \phi_a \phi_b$ and
		$\phi_{cd} = \phi_c \phi_d$. Maximizing
		this last expression over $\phi_{ab}$ and
		$\phi_{cd}$ proceeds as in~\eqref{eq:pattern2Upper}. We then get the upper bound 
		\begin{align*}
		&\sup_\Phi \ell(\tree^0,\Phi,x)\nonumber\\
		&\le L \log\left(\frac{1}{2}\right) + 2 \log \left(\frac{1}{L}\right) + 2 (L-1) \log  \left(1-\frac{1}{L}\right).
		\end{align*}
		On the other hand, for $\tree^*$ (or, in fact, any topology), setting $\lambda_b=\lambda_d=\lambda_m=0$ and $\lambda_a = \lambda_c$ so that $p_a=p_c=\frac{1}{L}$, we get
		\begin{eqnarray*}
			&&\P_{x \sim (\tree^*,\Phi)}\left(x^a_1{=}1,x^b_1{=}x^c_1{=}x^d_1{=}0\right)\\
			&&\qquad \times \P_{x \sim (\tree^*,\Phi)}\left(x^c_1{=}1,x^a_1{=}x^b_1{=}x^d_1{=}0\right)\\
			&& \qquad \times\P_{x \sim (\tree^*,\Phi)}\left(x^a_i{=}x^b_i{=}x^c_i{=}x^d_i{=}0\right)^{L-2}\\
			&& = \frac{1}{2}\left(\frac{1}{L}\right)\left(1 - \frac{1}{L}\right)
			\times \frac{1}{2}\left(\frac{1}{L}\right) \left(1 - \frac{1}{L}\right)\nonumber\\
			&&\qquad \times
			\left[\frac{1}{2}\left(1 - \frac{1}{L}\right)^2\right]^{L-2},
		\end{eqnarray*}
		which establishes the required lower bound
		on $\sup_\Phi \ell(\tree^*,\Phi,x)$.

		\item {\it One site with a 2/2-split $ac|bd$:}
		Without loss of generality, we assume that $(x^a_1,x^b_1,x^c_1,x^d_1)=(1,0,1,0)$
		and $(x^a_i,x^b_i,x^c_i,x^d_i)=(0,0,0,0)$ 
		for all $i = 2,\ldots,L$. Arguing as in the previous
		case,
		\begin{eqnarray*}
			&&\P_{x \sim (\tree^0,\Phi)}\left(x^a_1{=}x^c_1{=}1,x^b_1{=}x^d_1{=}0\right)\\
			&& \qquad \times\P_{x \sim (\tree^0,\Phi)}\left(x^a_i{=}x^b_i{=}x^c_i{=}x^d_i{=}0\right)^{L-1}\\
			&& \le \P_{x \sim (\tree^0,\Phi)}\left(x^a_1{=}1{\ne}x^b_1,x^c_1{\ne}x^d_1\right)\\
			&&  \qquad \times \P_{x \sim (\tree^0,\Phi)}\left(x^a_i{=}0{=}x^b_i,x^c_i{=}x^d_i\right)^{L-1}\\
			&& = [\P_{x \sim (\tree^0,\Phi)}\left(x^a_1{=}1{\ne}x^b_1\right)\,\P_{x \sim (\tree^0,\Phi)}\left(x^c_1{\ne}x^d_1\right)]\\
			&& \qquad \times [\P_{x \sim (\tree^0,\Phi)}\left(x^a_i{=}0{=}x^b_i\right)
			\, \P_{x \sim (\tree^0,\Phi)}\left(x^c_i{=}x^d_i\right)]^{L-1}\\
			&& = \frac{1}{2}\left(\frac{1-\phi_{ab}}{2}\right)
			\left(\frac{1-\phi_{cd}}{2}\right)\\
			&&  \qquad  \times \left[\frac{1}{2}\left(\frac{1+\phi_{ab}}{2}\right)\left(\frac{1+\phi_{cd}}{2}\right)\right]^{L-1}\\	
			&& = \left(\frac{1}{2}\right)^L
			\left(\frac{1-\phi_{ab}}{2}\right)
			\left(\frac{1+\phi_{ab}}{2}\right)^{L-1}\nonumber\\
			&& \qquad \left(\frac{1-\phi_{cd}}{2}\right)
			\left(\frac{1+\phi_{cd}}{2}\right)^{L-1},
		\end{eqnarray*}	
		where, again, $\phi_{ab} = \phi_a \phi_b$ and
		$\phi_{cd} = \phi_c \phi_d$.
		This bound matches the bound we obtained in the
		previous case. Hence, we once again get the upper bound 
		\begin{align*}
		&\sup_\Phi \ell(\tree^0,\Phi,x)\\
		&\le L \log\left(\frac{1}{2}\right) + 2 \log \left(\frac{1}{L}\right) + 2 (L-1) \log  \left(1-\frac{1}{L}\right).
		\end{align*}
		However, in this case, we claim that the maximum likelihood under $\tree^*$ is strictly greater. Indeed, letting $\lambda_a=\lambda_b=\lambda_c=\lambda_d=0$ and setting $\lambda_m$ such that $p_m=\frac{1}{L}$, 
		we get
		\begin{eqnarray*}
			&&\P_{x \sim (\tree^*,\Phi)}\left(x^a_1{=}x^c_1{=}1,x^b_1{=}x^d_1{=}0\right)\\
			&& \qquad \times\P_{x \sim (\tree^*,\Phi)}\left(x^a_i{=}x^b_i{=}x^c_i{=}x^d_i{=}0\right)^{L-1}\\		
			&&= \frac{1}{2}\left(\frac{1}{L}\right)
			\times
			\left[\frac{1}{2}\left(1 - \frac{1}{L}\right)\right]^{L-1},
		\end{eqnarray*}
		so
		\begin{align*}
		&\sup_\Phi \ell(\tree^*,\Phi,x)\nonumber\\
		&\ge L \log\left(\frac{1}{2}\right) + \log \left(\frac{1}{L}\right) + (L-1) \log  \left(1-\frac{1}{L}\right).
		\end{align*}
		Therefore
		\begin{align*}
		&\sup_\Phi \ell(\tree^*,\Phi,x)
		- \sup_\Phi \ell(\tree^0,\Phi,x)\nonumber\\
		&\ge - \log \left(\frac{1}{L}\right) - (L-1) \log  \left(1-\frac{1}{L}\right)\nonumber\\ 
		&=: K_{12}> 0,
		\end{align*}
		where the last equality is a definition.
	\end{enumerate}
	In all the above cases, a similar argument still applies if one replaces $\tree^0$ with $\tree^1$ (by exchanging the roles of $b$ and $d$ throughout).
	That concludes the proof of the claim in (b).
\end{proof}

\begin{proof}[Proof of Lemma~\ref{lem:key2}]
	(a) The expressions for $\qsat_{ac|bd}$,
	$\qsat_{ab|cd}$ and $\qsat_{ad|bc}$ come
	from taking $L=1$ in Lemma~\ref{lem:key1} (a) and raising to the power $L$. Specifically, it was shown
	in~\eqref{eq:q12} that observing a single site splitting $a,c$ from $b,d$ has probability of the form $(1/2)\rho^2 + O(\rho^3)$. Since a saturated locus contains $L$ sites with the same probability, we raise this expression to the power $L$ to obtain 
	\begin{align*}
	\qsat_{ac|bd} =
	\left(\frac{1}{2}\right)^{L} \rho^{2L} + \bigo(\rho^{2L+1}).
	\end{align*}
	Similarly, it was observed in the proof of Lemma~\ref{lem:key1} (a) that observing a single site splitting $a,b$ from $c,d$ (or $a,d$ from $b,c$) has probability $O(\rho^3)$. Raising to the power $L$ gives $\qsat_{ab|cd}, \qsat_{ad|bc}
	= \bigo(\rho^{3L})$.
	
	For the first part of the claim, we consider several cases.
	\begin{itemize}
		\item[-] Suppose that sequence dataset $x$ contains at least one uninformative character (i.e., a constant site or a singleton). Then, in computing bootstrap supports, there is probability at least $(1/L)^L$ of resampling a dataset containing {\em only} that particular uninformative site.
		We have shown in the proof of Lemma~\ref{lem:key1} (b) (see cases 1 and 2 with $L=1$) that all topologies have an equal ML score on such a site and therefore on such a resampled dataset (since the probability of observing a dataset of this type is the probability of observing a single site to the power $L$). Hence each topology is supported with probability $1/3$. Hence the bootstrap support for the ML-optimizer for $x$ is at most $1-(2/3)(1/L)^L \leq B$ and $x$ is rejected by WSB*.
		
		\item[-] Suppose that sequence dataset $x$ contains two different informative characters (i.e., two different splits). One of those splits is incompatible with the ML-optimizer (possibly random) for $x$. Then, in computing bootstrap supports, there is probability at least $(1/L)^L$ of resampling a dataset containing only that incompatible split.  From the argument in Lemma~\ref{lem:key1} (b) again (case 5 with $L=1$), the incompatible split is then the ML-optimizer of such a resampled dataset. Hence the bootstrap support for the ML-optimizer for $x$ is at most $1-(1/L)^L < 1-(2/3)(1/L)^L \leq B$ and $x$ is rejected by WSB*.
		
		\item[-] Suppose finally that sequence dataset $x$ contains only characters equivalent to a given split. Then all resampled datasets are saturated for that split as well.  From the argument in Lemma~\ref{lem:key1} (b) again (case 5 with $L=1$), that split is the unique ML-optimizer for $x$. Hence the bootstrap support for the ML-optimizer for $x$ is $1 > B$ and $x$ is passed along by WSB* to the summary method.
	\end{itemize}
	
	\medskip
	
	(b) This was proved in (a).
	
\end{proof}

\subsection*{Partitioned ML on CFN model}

\begin{proof}[Proof of Claim~\ref{claim:partML}]
	Using Lemma~\ref{lem:key1}, we are now ready to prove Claim~\ref{claim:partML}.
	
	We first show that, for a fixed topology, as the number of loci grows to infinity the maximum likelihood value converges almost surely to the expected value of the maximum likelihood value on a single locus.
	\begin{lem}[Convergence of the partitioned log-likelihood]\label{lem:conv-like}
		Let $\tree'$ be a fixed topology on the four taxa with branch lengths $\Lambda'$. Let also $\tree''$ be a fixed topology on the four taxa (possibly, but not necessarily, equal to $\tree'$). If the length-$L$ locus sequence datasets $\chi_{\cdot j}$, $j =1,2,\ldots$, are generated under the CFN model on $(\tree',\Lambda')$, then it holds that
		\begin{align}
		&\frac{1}{m}\sum_{j=1}^m \sup_{\Lambda_j}  \ell(\tree'',\Lambda_j,\chi_{\cdot j})\nonumber\\
		&\rightarrow \E_{\chi_{\cdot 1} \sim (\tree',\Lambda')} \left[\sup_\Lambda \ell(\tree'',\Lambda,\chi_{\cdot 1})\right]
		\in [-4 L \log 2, - L \log 2],\label{likelihoodConvergesAS}
		\end{align}
		almost surely as $m\rightarrow+\infty$. Above, the subscript $\chi_{\cdot 1} \sim (\tree',\Lambda')$ indicates that the expectation
		is taken over a single locus under the CFN model on $(\tree',\Lambda')$.
	\end{lem}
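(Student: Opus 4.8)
The plan is to recognize that the quantities being averaged are i.i.d.\ across loci and to invoke the strong law of large numbers (SLLN). Set $Y_j := \sup_{\Lambda_j} \ell(\tree'',\Lambda_j,\chi_{\cdot j})$. Since the loci $\chi_{\cdot j}$, $j=1,2,\ldots$, are generated independently and identically under the CFN model on $(\tree',\Lambda')$, and since each $Y_j$ is one fixed measurable functional applied to $\chi_{\cdot j}$, the sequence $(Y_j)_{j\ge 1}$ is i.i.d. The left-hand side of~\eqref{likelihoodConvergesAS} is exactly $\frac{1}{m}\sum_{j=1}^m Y_j$, so Kolmogorov's SLLN will give the claimed almost-sure convergence to $\E[Y_1] = \E_{\chi_{\cdot 1}\sim(\tree',\Lambda')}[\sup_\Lambda \ell(\tree'',\Lambda,\chi_{\cdot 1})]$ as soon as $\E|Y_1| < \infty$.

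To secure integrability and, at the same time, the asserted interval $[-4L\log 2,-L\log 2]$, I would establish a deterministic two-sided bound on $Y_1$ holding for every realization of $\chi_{\cdot 1}$. The upper bound $Y_1 \le -L\log 2$ is already available: by~\eqref{logLLowerbound2-key}, $\sup_\Lambda \ell(\tree'',\Lambda,\chi_{\cdot 1}) \le \log (1/2)^L = -L\log 2$ for every topology, since the probability of the observed locus is at most the marginal probability $(1/2)^L$ of the states seen at a single taxon. For the lower bound it suffices to exhibit one admissible choice of branch lengths. Taking every branch fully saturated (probability of change $p_e \to \tfrac12$, i.e.\ $\lambda_e \to \infty$ on each edge), the CFN transition matrix on each edge becomes $\begin{pmatrix} 1/2 & 1/2 \\ 1/2 & 1/2 \end{pmatrix}$, so all four leaf states are mutually independent and uniform and every four-taxon site pattern has probability exactly $(1/2)^4$. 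Hence $\ell$ at these branch lengths equals $\sum_{i=1}^L \log (1/2)^4 = -4L\log 2$, and since $Y_1$ is a supremum we get $Y_1 \ge -4L\log 2$. Note that both bounds are uniform over the topology, so the case $\tree'' \neq \tree'$ is handled identically. Therefore $-4L\log 2 \le Y_1 \le -L\log 2$ surely, so $Y_1$ is bounded, hence integrable, and its expectation lies in the same interval.

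With these ingredients the conclusion follows immediately: the SLLN yields the almost-sure convergence in~\eqref{likelihoodConvergesAS}, and the two-sided bound on $Y_1$ passes to $\E[Y_1]$, placing the limit in $[-4L\log 2,-L\log 2]$. The only point requiring genuine care---the main (if minor) obstacle---is the measurability of $Y_1$, since it is a supremum over the uncountable parameter $\Lambda$. I would handle this by reparametrizing each edge by its change probability $p_e \in [0,\tfrac12]$, noting that $\ell(\tree'',\Lambda,\chi_{\cdot 1})$ is a finite sum of logarithms of multilinear functions of the $p_e$ and is therefore continuous on the compact domain $[0,\tfrac12]^{\mathrm{edges}}$ (the saturated configuration being the boundary value $p_e=\tfrac12$); the supremum is then attained and is a measurable function of the data, legitimizing the i.i.d.\ and SLLN steps. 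Everything else is a routine application of the law of large numbers once the uniform bounds are in place.
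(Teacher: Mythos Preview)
Your proposal is correct and follows essentially the same route as the paper: i.i.d.\ locus-wise maximized log-likelihoods, SLLN, and the two-sided bound obtained from the marginal probability at one leaf (upper) and fully saturated branches (lower). The only minor difference is that the paper disposes of the measurability issue implicitly by observing that the space of length-$L$ four-taxon datasets is finite, whereas you argue via continuity of $\ell$ in the $p_e$-parametrization on a compact domain; both are valid and lead to the same conclusion.
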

	\begin{proof}
		For a given topology and data set there is a unique maximum likelihood value, though the branch lengths at which it is attained may not themselves be unique.
		For any given locus $j$, there are a finite number of four-sequence data sets $\chi_{\dot j}$ of length $L$ that can occur under the CFN model. As the number of loci approaches infinity, the frequency of each data set approaches its expected value by the Strong Law of Large Numbers (SLLN) (see, e.g.,~\cite{Durrett:96}).
		To check that the conditions of
		the SLLN are satisfied, note that the log-likelihood is
		non-positive. In fact, by taking branch lengths to $+\infty$ under the CFN model, we have for any topology $\tree$ on $\{a,b,c,d\}$ and any locus data set $\chi_{\cdot 1}$
		\begin{equation}
		\label{logLLowerbound1}
		\sup_{\Lambda} \ell(\tree,\Lambda,\chi_{\cdot 1})
		\geq \log \left(\frac{1}{2}\right)^{4L}
		= -4 L \log 2.
		\end{equation}
		On the other hand, because by inclusion the probability
		of observing $\chi_{\cdot 1}$ is at most the probability
		of observing $\chi_{a 1}$, which is simply $\left(\frac{1}{2}\right)^L$ by independence of the sites, we also have
		\begin{equation}
		\label{logLLowerbound2}
		\sup_{\Lambda} \ell(\tree,\Lambda,\chi_{\cdot 1})
		\leq \log \left(\frac{1}{2}\right)^{L}
		= - L \log 2.
		\end{equation}
		So the expectation on the RHS of~\eqref{likelihoodConvergesAS} lies in the
		interval $[-4 L \log 2, - L \log 2]$.  
	\end{proof}

	Hence, in view of Lemma~\ref{lem:conv-like}, our goal is to show that there is $\rho > 0$ small enough such that the expected log-likelihood under $(\tree^0,\Lambda^0)$ is higher for $\tree^*$ than it is for $\tree^0$ or $\tree^1$. 
	That is, it suffices to establish the following
	claim.
	\begin{lem}[Expected locus-wise maximum likelihood on a fixed topology: key inequality]\label{lem:exp-ml}
		There exists $\rho > 0$ 
		such that
		\label{claim:expectations}
		\begin{align}
		&\E_{\chi_{\cdot 1} \sim (\tree^0,\Lambda^0)} \left[\sup_\Lambda \ell(\tree^0,\Lambda,\chi_{\cdot 1})\right]\nonumber\\
		&<
		\E_{\chi_{\cdot 1} \sim (\tree^0,\Lambda^0)} \left[\sup_\Lambda \ell(\tree^*,\Lambda,\chi_{\cdot 1})\right],\label{likelihoodExpectation}
		\end{align} 
		and
		\begin{align}
		&\E_{\chi_{\cdot 1} \sim (\tree^0,\Lambda^0)} \left[\sup_\Lambda \ell(\tree^1,\Lambda,\chi_{\cdot 1})\right]\nonumber\\
		&<
		\E_{\chi_{\cdot 1} \sim (\tree^0,\Lambda^0)} \left[\sup_\Lambda \ell(\tree^*,\Lambda,\chi_{\cdot 1})\right].\label{likelihoodExpectation-1}
		\end{align} 
		
	\end{lem}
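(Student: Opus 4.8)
The plan is to express the expected difference of per-locus maximal log-likelihoods as a sum over locus patterns and then show that the strictly positive contribution of the single informative pattern $\patterns_{12}$, which is of order $\rho^2$, dominates all remaining terms, which are of order $\rho^3$. Set
\begin{equation*}
D := \E_{\chi_{\cdot 1} \sim (\tree^0,\Lambda^0)}\left[\sup_\Lambda \ell(\tree^*,\Lambda,\chi_{\cdot 1}) - \sup_\Lambda \ell(\tree^0,\Lambda,\chi_{\cdot 1})\right],
\end{equation*}
so that \eqref{likelihoodExpectation} is equivalent to $D > 0$. Since there are only finitely many length-$L$ four-taxon datasets, I would write
\begin{equation*}
D = \sum_x \P[\chi_{\cdot 1} = x]\left(\sup_\Lambda \ell(\tree^*,\Lambda,x) - \sup_\Lambda \ell(\tree^0,\Lambda,x)\right),
\end{equation*}
and group the summands into the five dominant classes and the remainder $\widetilde{\patterns}$.

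For the dominant classes, Lemma~\ref{lem:key1}(b) does the work: every $x \in \patterns_0 \cup \patterns_{11} \cup \patterns_{2=} \cup \patterns_{2\neq}$ contributes a nonnegative amount, so these classes can only help, while each $x \in \patterns_{12}$ contributes at least $K_{12} > 0$. Hence the combined contribution of the dominant classes is at least $\P[\chi_{\cdot 1} \in \patterns_{12}]\,K_{12}$, and by Lemma~\ref{lem:key1}(a) this equals $\left(c_L\rho^2 + \bigo(\rho^3)\right)K_{12}$ for a constant $c_L > 0$ depending only on $L$. This is the sole source of strict positivity and is of exact order $\rho^2$.

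To control the remainder, I would use the uniform two-sided bounds \eqref{logLLowerbound1}--\eqref{logLLowerbound2}, which give $\sup_\Lambda \ell(\tree,\Lambda,x) \in [-4L\log 2, -L\log 2]$ for every topology $\tree$ and dataset $x$, so the per-dataset difference is bounded in absolute value by $3L\log 2$. Because $L$ is fixed, there are at most $2^{4L}$ datasets, and each $x \in \widetilde{\patterns}$ has $\P[\chi_{\cdot 1} = x] = \bigo(\rho^3)$ by Lemma~\ref{lem:key1}(a); hence $\P[\chi_{\cdot 1} \in \widetilde{\patterns}] = \bigo(\rho^3)$ and the total contribution of $\widetilde{\patterns}$ to $D$ is at most $3L\log 2\cdot \P[\chi_{\cdot 1} \in \widetilde{\patterns}] = \bigo(\rho^3)$ in absolute value. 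Combining,
\begin{equation*}
D \geq c_L K_{12}\,\rho^2 - \bigo(\rho^3),
\end{equation*}
which is strictly positive for all sufficiently small $\rho > 0$, proving \eqref{likelihoodExpectation}. The companion inequality \eqref{likelihoodExpectation-1} follows verbatim, since Lemma~\ref{lem:key1}(b) furnishes the identical bounds with $\tree^1$ substituted for $\tree^0$; taking $\rho$ small enough to make both differences positive finishes the proof.

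The crux is to guarantee that the aggregate error from the many non-dominant patterns cannot overwhelm the single order-$\rho^2$ gain supplied by $\patterns_{12}$. This hinges on two facts established earlier: the number of length-$L$ datasets is a constant (depending only on $L$), and the maximal log-likelihood is uniformly bounded across topologies, so the whole remainder is genuinely $\bigo(\rho^3)$ rather than merely a sum of small terms of unbounded number. With these in place, the $\rho^2$ term wins for small $\rho$ and the argument closes.
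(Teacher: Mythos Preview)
Your proposal is correct and follows essentially the same approach as the paper: expand the expected difference over locus patterns, use Lemma~\ref{lem:key1}(b) to make the dominant non-informative classes contribute nonnegatively while $\patterns_{12}$ contributes at least $K_{12}$, bound the remainder $\widetilde{\patterns}$ by combining the uniform $3L\log 2$ bound on the log-likelihood gap with the $\bigo(\rho^3)$ probability estimate from Lemma~\ref{lem:key1}(a), and conclude that the order-$\rho^2$ gain from $\patterns_{12}$ dominates for small $\rho$. Your write-up is slightly more explicit than the paper's (e.g., the $2^{4L}$ count to justify summing the $\bigo(\rho^3)$ terms), but the argument is the same.
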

	\begin{proof}
		Let $\patterns$ be the set of all possible
		single-locus datasets. 
		To prove Lemma~\ref{claim:expectations}, 
		we expand the expectations in~\eqref{likelihoodExpectation} over $\patterns$.
		In other words, we seek to show that
		\begin{align}
		&\sum_{x \in \patterns} 
		\P_{\chi_{\cdot 1} \sim (\tree^0,\Lambda^0)}[\chi_{\cdot 1} = x]\nonumber\\
		& \qquad \times \left\{\sup_\Lambda \ell(\tree^*,\Lambda,x) - \sup_\Lambda \ell(\tree^0,\Lambda,x)\right\} > 0. \label{likelihoodExpectation2}	
		\end{align}
		We then use Lemma~\ref{lem:key1} as follows. By (a), 	
		\begin{align}
		&\sum_{x \in \widetilde{\patterns}} 
		\P_{\chi_{\cdot 1} \sim (\tree^0,\Lambda^0)}[\chi_{\cdot 1} = x]\nonumber\\
		&\qquad \times \left|\,\sup_\Lambda \ell(\tree^*,\Lambda,x) - \sup_\Lambda \ell(\tree^0,\Lambda,x)\,\right| = \bigo(\rho^3). 	\label{likelihoodExpectation3}
		\end{align}
		Indeed, any locus site pattern in $\widetilde{\patterns}$ has probability $\mathcal{O}(\rho^3)$. Moreover, recall
		from~\eqref{logLLowerbound1} and~\eqref{logLLowerbound2} that the expression
		in absolute value is bounded by $3 L \log 2$. In addition, by (a) and (b), we then 
		arrive at
		\begin{align*}
		&\sum_{x \in \patterns} 
		\P_{\chi_{\cdot 1} \sim (\tree^0,\Lambda^0)}[\chi_{\cdot 1} = x]\nonumber\\
		& \qquad \times \left\{\sup_\Lambda \ell(\tree^*,\Lambda,x) - \sup_\Lambda \ell(\tree^0,\Lambda,x)\right\}\\
		&\geq K_{12} \left\{\left(\frac{1}{2}\right)^L \rho^2 + \bigo(\rho^3)\right\}
		+ \bigo(\rho^3) > 0,
		\end{align*}
		for $\rho > 0$ small enough.
		
		The same argument applies for~\eqref{likelihoodExpectation-1}. 	
	\end{proof}	
	
	Combining Lemmas~\ref{lem:conv-like}
	and~\ref{lem:exp-ml} gives Claim~\ref{claim:partML}.
\end{proof}

\subsection*{Reasonable summary methods on CFN model}

\begin{proof}[Proof of Claim~\ref{claim:reas}]
	Using Lemma~\ref{lem:key1}, we are now ready to prove Claim~\ref{claim:reas}.
	
	By definition of a reasonable summary method, on a four-taxon dataset, $\mathcal{A}$ outputs the most common quartet topology (breaking ties uniformly at random). We also assume that for genes with multiple optimal ML topologies, a highest scoring topology is picked uniformly at random. We denote by $\hat{\mathcal{R}}(\chi_{\cdot j})$ be the ML gene tree on the $j$-th locus sequence dataset. The law of large numbers immediately gives the following.
	\begin{lem}[Convergence of frequencies]\label{lem:conv-reas}
		Let $\tree'$ be a fixed topology on the four taxa with branch lengths $\Lambda'$. Let also $\tree''$ be a fixed topology on the four taxa (possibly, but not necessarily, equal to $\tree'$). If the length-$L$ locus sequence datasets $\chi_{\cdot j}$, $j =1,2,\ldots$, are generated under the CFN model on $(\tree',\Lambda')$, then it holds that
		\begin{equation*}
		\frac{1}{m}\sum_{j=1}^m 
		\mathbf{1}\left[
		\hat{\mathcal{R}}(\chi_{\cdot j}) = \tree''
		\right] 
		\longrightarrow 
		\P_{\chi_{\cdot 1} \sim (\tree',\Lambda')} \left[
		\hat{\mathcal{R}}(\chi_{\cdot 1}) = \tree''\right],
		\end{equation*}
		almost surely as $m\rightarrow+\infty$. Above, $\mathbf{1}[\mathcal{E}]$ is $1$ if event $\mathcal{E}$ occurs, and $0$ otherwise.
	\end{lem}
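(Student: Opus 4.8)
The plan is to recognize this statement as a direct application of the Strong Law of Large Numbers (SLLN), exactly paralleling the proof of Lemma~\ref{lem:conv-like}. The essential structural fact is that, under the multi-locus CFN model on $(\tree',\Lambda')$, the locus datasets $\chi_{\cdot j}$, $j=1,2,\ldots$, are independent and identically distributed, since each locus evolves independently down the same fixed model tree. First I would observe that each summand $\mathbf{1}[\hat{\mathcal{R}}(\chi_{\cdot j}) = \tree'']$ takes values in $\{0,1\}$ and is therefore trivially integrable; so, unlike the explicit boundedness check needed in Lemma~\ref{lem:conv-like}, there is nothing to verify on that front.

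The one point requiring care is the tie-breaking rule built into the definition of $\hat{\mathcal{R}}$: when a locus admits several highest-scoring ML topologies, one of them is chosen uniformly at random. To keep the summands genuinely i.i.d., I would enlarge the probability space by introducing auxiliary random variables $U_1, U_2, \ldots$, i.i.d. uniform on $[0,1]$ and independent of the sequence data, with $U_j$ used to resolve ties at locus $j$. Then each pair $(\chi_{\cdot j}, U_j)$ is i.i.d. across $j$, and since $\hat{\mathcal{R}}(\chi_{\cdot j})$ is a measurable function of $(\chi_{\cdot j}, U_j)$ alone, the indicators $\mathbf{1}[\hat{\mathcal{R}}(\chi_{\cdot j}) = \tree'']$ form an i.i.d. sequence of bounded random variables whose common expectation is precisely $\P_{\chi_{\cdot 1} \sim (\tree',\Lambda')}[\hat{\mathcal{R}}(\chi_{\cdot 1}) = \tree'']$, the probability appearing on the right-hand side (computed over both the data and the tie-breaking).

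Applying the SLLN (see, e.g., \cite{Durrett:96}) to this i.i.d. integrable sequence then yields the almost-sure convergence of the empirical frequency to this expectation, which is exactly the claimed limit. There is essentially no obstacle here: the entire content lies in the i.i.d. structure of the loci together with the bookkeeping needed to absorb the tie-breaking randomness into an augmented i.i.d. sequence, after which the SLLN applies verbatim. This mirrors Lemma~\ref{lem:conv-like}, with the indicators here playing the role of the locus-wise supremal log-likelihoods there.
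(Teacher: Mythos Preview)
Your proposal is correct and matches the paper's approach: the paper simply states that ``the law of large numbers immediately gives'' the lemma, without further detail. Your treatment is in fact more careful than the paper's, since you explicitly handle the tie-breaking randomness by augmenting the probability space, a point the paper leaves implicit.
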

	
	Hence, in view of Lemma~\ref{lem:conv-reas}, our goal is to show that there is $\rho > 0$ small enough such that, under $(\tree^0,\Lambda^0)$, $\tree^*$ is more likely to be the ML gene tree topology than $\tree^0$ or $\tree^1$. 
	That is, it suffices to establish the following
	claim.
	\begin{lem}[Locus-wise maximum likelihood on a fixed topology: key inequality]\label{lem:exp-reas}
		There exists $\rho > 0$ 
		such that
		\begin{align}
		&\P_{\chi_{\cdot 1} \sim (\tree^0,\Lambda^0)} \left[\hat{\mathcal{R}}(\chi_{\cdot 1}) =\tree^0\right]\nonumber\\
		&<
		\P_{\chi_{\cdot 1} \sim (\tree^0,\Lambda^0)} \left[\hat{\mathcal{R}}(\chi_{\cdot 1}) =\tree^*\right],\label{likelihoodExpectation-reas}
		\end{align} 
		and
		\begin{align}
		&\P_{\chi_{\cdot 1} \sim (\tree^0,\Lambda^0)} \left[\hat{\mathcal{R}}(\chi_{\cdot 1}) =\tree^1\right]\nonumber\\
		&<
		\P_{\chi_{\cdot 1} \sim (\tree^0,\Lambda^0)} \left[\hat{\mathcal{R}}(\chi_{\cdot 1}) =\tree^*\right].\label{likelihoodExpectation-reas-1}
		\end{align} 
		
	\end{lem}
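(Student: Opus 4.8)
The plan is to follow the same pattern-decomposition strategy used for Claim~\ref{claim:partML}, but to track which topology \emph{wins} the single-locus ML optimization (with uniform tie-breaking) rather than the difference of supremal log-likelihoods. By Lemma~\ref{lem:conv-reas} it suffices to compare the probabilities $\P_{\chi_{\cdot 1} \sim (\tree^0,\Lambda^0)}[\hat{\mathcal{R}}(\chi_{\cdot 1}) = t]$ for $t \in \{\tree^0, \tree^*, \tree^1\}$, so I would expand each of these as a sum over single-locus datasets, partitioned into the five dominant classes $\patterns_0, \patterns_{11}, \patterns_{2=}, \patterns_{2\neq}, \patterns_{12}$ and the remainder $\widetilde{\patterns}$ exactly as in Lemma~\ref{lem:key1}.

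The key inputs are the ML-optimizer identities contained in the proof of Lemma~\ref{lem:key1}(b). On every uninformative dominant pattern, i.e. every $x \in \patterns_0 \cup \patterns_{11} \cup \patterns_{2=} \cup \patterns_{2\neq}$, that proof shows the supremal log-likelihood is attained equally by \emph{any} topology (the upper bound derived for $\tree^0$ is matched by $\tree^*$, and by the $b \leftrightarrow d$ symmetry of these patterns also by $\tree^1$). Hence on such $x$ there is an exact three-way tie, and a reasonable summary method selects each of $\tree^0, \tree^*, \tree^1$ with probability $\tfrac13$. On the informative pattern $\patterns_{12}$, by contrast, Lemma~\ref{lem:key1}(b) gives $\sup_\Lambda \ell(\tree^*,\Lambda,x) - \sup_\Lambda \ell(\tree^0,\Lambda,x) \geq K_{12} > 0$ and likewise for $\tree^1$, so $\tree^* = ac|bd$ is the \emph{unique} ML optimizer and is selected with probability $1$.

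Combining these observations, I would write
\begin{align*}
&\P_{\chi_{\cdot 1} \sim (\tree^0,\Lambda^0)}[\hat{\mathcal{R}}(\chi_{\cdot 1}) = \tree^*] - \P_{\chi_{\cdot 1} \sim (\tree^0,\Lambda^0)}[\hat{\mathcal{R}}(\chi_{\cdot 1}) = \tree^0]\\
&= \P_{\chi_{\cdot 1} \sim (\tree^0,\Lambda^0)}[\chi_{\cdot 1} \in \patterns_{12}] + R,
\end{align*}
where the $\tfrac13$-contributions of the four uninformative dominant classes cancel identically in the difference, and the error term $R$, collecting the contribution of $\widetilde{\patterns}$, satisfies $|R| = \bigo(\rho^3)$ since every pattern in $\widetilde{\patterns}$ has probability $\bigo(\rho^3)$ by Lemma~\ref{lem:key1}(a) (and there are finitely many, $L$ being fixed). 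Since each dataset in $\patterns_{12}$ has probability $(1/2)^L \rho^2 + \bigo(\rho^3)$ and $\patterns_{12}$ has $L\cdot 2^{L}$ elements (choice of the split site, its orientation, and the constant values at the remaining sites), the total mass $\P[\chi_{\cdot 1} \in \patterns_{12}]$ is a strictly positive quantity of leading order $\rho^2$. Hence the difference is positive for $\rho > 0$ small enough, establishing~\eqref{likelihoodExpectation-reas}; the identical argument with $b$ and $d$ interchanged gives~\eqref{likelihoodExpectation-reas-1}.

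The main obstacle I anticipate is the bookkeeping that makes the cancellation exact: I must verify that on each uninformative dominant pattern the ML optimum is a genuine three-way tie, not merely the weak inequality $\sup_\Lambda \ell(\tree^*,\Lambda,x) \geq \sup_\Lambda \ell(\tree^0,\Lambda,x)$, so that the mass $\tfrac13$ assigned to $\tree^0$ and to $\tree^*$ coincides and drops out of the difference. This is exactly what the ``in fact, any topology'' remarks in the proof of Lemma~\ref{lem:key1}(b) supply, together with the $b \leftrightarrow d$ symmetry needed to include $\tree^1$ in the tie. The only other point requiring care is confirming that the informative mass $\P[\chi_{\cdot 1}\in\patterns_{12}]$ genuinely contributes at order $\rho^2$ rather than being cancelled there, which follows because no other pattern makes $\tree^*$ the ML winner at that order.
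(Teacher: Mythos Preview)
Your proposal is correct and follows essentially the same approach as the paper: both use Lemma~\ref{lem:key1}(b) to observe that on the uninformative dominant patterns all three topologies tie (so the tie-breaking mass cancels in the difference), while on $\patterns_{12}$ the topology $\tree^*$ is the unique ML optimizer, and Lemma~\ref{lem:key1}(a) to bound the contribution of $\widetilde{\patterns}$ by $\bigo(\rho^3)$. The paper presents this as separate upper and lower bounds on $\P[\hat{\mathcal{R}}(\chi_{\cdot 1})=\tree^*]$ and $\P[\hat{\mathcal{R}}(\chi_{\cdot 1})=\tree^0]$ with matching leading terms, whereas you take the difference directly, but the substance is identical.
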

	\begin{proof}
		By Lemma~\ref{lem:key1} (b), for all $x \in \patterns_0 \cup \patterns_{11} \cup \patterns_{2=} \cup \patterns_{2\neq}$,
		all three topologies are ML-optimal, while
		for all $x \in \patterns_{12}$, $\tree^*$ alone
		is ML-optimal. Moreover, by Lemma~\ref{lem:key1} (a), all other patterns
		are negligible. Hence, we get
		\begin{align*}
		&\P_{\chi_{\cdot 1} \sim (\tree^0,\Lambda^0)} \left[\hat{\mathcal{R}}(\chi_{\cdot 1}) =\tree^*\right]\\
		&\geq \frac{1}{3} 2^L
		[Q_0 + 4 Q_{11} + 4 Q_{2=} + 3 Q_{2 \neq}]
		+ 2^L Q_{12}
		+ \bigo(\rho^3),
		\end{align*}
		while
		\begin{align*}
		&\P_{\chi_{\cdot 1} \sim (\tree^0,\Lambda^0)} \left[\hat{\mathcal{R}}(\chi_{\cdot 1}) =\tree^0\right]\\
		&\leq \frac{1}{3} 2^L
		[Q_0 + 4 Q_{11} + 4 Q_{2=} + 3 Q_{2 \neq}]
		+ \bigo(\rho^3),
		\end{align*}
		and similarly for $\tree^1$.
		The result then follows from the fact that
		\begin{equation*}
		Q_{12}=  \left(\frac{1}{2}\right)^L \rho^2 + \bigo(\rho^3).
		\end{equation*}
	\end{proof}
	Combining Lemmas~\ref{lem:conv-reas}
	and~\ref{lem:exp-reas} gives Claim~\ref{claim:reas}.
\end{proof}

\subsection*{WSB* pipeline on CFN model}

\begin{proof}[Proof of Claim~\ref{claim:wsb*}]
	Using Lemma~\ref{lem:key2}, we are now ready to prove Claim~\ref{claim:wsb*}.
	
	We begin with two basic results.
	\begin{lem}
		\label{lem:bins}
		In a WSB* pipeline with bootstrap
		support threshold $B \geq 1 - \frac{2}{3}\left(\frac{1}{L}\right)^L$, 
		there will be at most three bins (one for each of the three
		possible binary topologies on four leaves), and
		the bin associated with topology $ab|cd$ will have
		all the saturated genes that support $ab|cd$ (and similarly for $ac|bd$ and $ad|bc$).
	\end{lem}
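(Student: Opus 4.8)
The plan is to reduce everything to the genes that survive WSB* filtering and then show that their incompatibility graph has such a rigid structure that the minimum vertex coloring is essentially forced. First I would invoke Lemma~\ref{lem:key2}(a): under the stated threshold $B \geq 1 - \frac{2}{3}\left(\frac{1}{L}\right)^L$, the filtering step of WSB* discards every gene except the split-saturated ones, so the only genes entering the binning stage lie in $\patsat_{ab|cd} \cup \patsat_{ac|bd} \cup \patsat_{ad|bc}$. Each such gene supports exactly one quartet split, and I would next record that this split receives bootstrap support exactly $1$: as observed in the third case of the proof of Lemma~\ref{lem:key2}(a), every bootstrap resample of a split-saturated gene is again saturated for the \emph{same} split, and by Lemma~\ref{lem:key2}(b) that split is then the unique ML optimizer of every resample. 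Since $B < 1$, this support strictly exceeds $B$.

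With full support on each internal edge in hand, I would analyze the incompatibility graph restricted to these genes. By the WSB incompatibility rule, two genes are incompatible precisely when each carries an internal edge of support strictly greater than $B$ and these two edges conflict. For a split-saturated gene the unique internal edge has support $1 > B$, and on four taxa any two distinct quartet topologies are pairwise conflicting; hence two surviving genes are incompatible if and only if they support different splits (and genes supporting the same split are compatible, since that split does not conflict with itself). Therefore the incompatibility graph is a complete (at most) tripartite graph whose parts are exactly the three split-classes $\patsat_{ab|cd}$, $\patsat_{ac|bd}$, and $\patsat_{ad|bc}$.

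Finally I would read off the binning from the minimum vertex coloring. In a complete multipartite graph every independent set is contained in a single part, so each color class of any proper coloring is a subset of one split-class; consequently the chromatic number equals the number of nonempty split-classes, which is at most three. Using exactly that many colors then forces each nonempty split-class to constitute one color class, i.e.\ one bin. This yields at most three bins, one per realized quartet topology, with the bin associated to $ab|cd$ containing precisely the saturated genes that support $ab|cd$, and symmetrically for $ac|bd$ and $ad|bc$, as claimed.

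The step I expect to require the most care is the identification of the incompatibility graph with a complete tripartite graph, since it relies simultaneously on two facts: that each split-saturated gene attains full bootstrap support on its split (so its internal edge always clears the threshold), and that distinct quartet splits are pairwise conflicting (so compatibility is governed solely by which split a gene supports). Once this identification is in place, the coloring statement is a short combinatorial consequence, and the remainder of the lemma follows immediately.
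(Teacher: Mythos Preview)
Your proposal is correct and follows essentially the same approach as the paper: both arguments invoke Lemma~\ref{lem:key2}(a) to restrict to split-saturated genes, observe that these have bootstrap support $1$, identify the incompatibility graph as complete (at most) tripartite, and read off the bins from the minimum vertex coloring. Your treatment is in fact slightly more careful than the paper's on the combinatorial point that in a complete multipartite graph every independent set lies in a single part, which is what forces the color classes to coincide with the split-classes.
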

	\begin{proof}
		By Lemma~\ref{lem:key2} (b), the 
		genes saturated for a given split have bootstrap support of 100\%. Hence, no
		two such genes can be placed in the same bin if
		they support different tree topologies. Therefore,
		for any bin, 
		the genes placed in the bin will support the
		same topology.
		By Lemma~\ref{lem:key2} (a), all other genes are discarded.
		
		Since there are only three tree topologies,  the incompatibility graph 
		is the union of a complete 3-partite graph (defined by the split-saturated
		genes). 
		Hence, the incompatibility graph
		can be $3$-colored. 
		Since statistical binning seeks the minimum vertex coloring
		for the incompatibility graph, it will partition the
		genes into three bins, with one bin for each binary tree topology. 
		Hence, the split-saturated genes are partitioned into three sets
		based on the tree topology they support.
	\end{proof}
	\begin{lem}
		\label{lemma2}
		(Lemma 2 from \cite{WSB}:)
		Let $S$ be a set of taxa, and let 
		$S_i$ be a set of DNA sequences for $S$, with 
		$i = 1, 2, \ldots p$.  
		Suppose that tree topology $t$ is an optimal 
		solution for GTR maximum likelihood for each $S_i$
		(allowing various GTR parameters 
		for different $i = 1, 2, \ldots p$).  
		Then $t$ will be an optimal solution
		to a fully partitioned GTR maximum likelihood analysis 
		on a concatenation of $S_1, S_2, \ldots, S_p$.
	\end{lem}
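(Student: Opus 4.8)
The plan is to exploit the defining feature of a fully partitioned analysis: the only parameter shared across the partitions $S_1,\ldots,S_p$ is the tree topology, while each partition is assigned its own independent collection of numerical parameters (branch lengths together with GTR rate-matrix entries). Write $\Theta_i$ for the full parameter collection of partition $i$, and let $\ell_i(t,\Theta_i)$ be the log-likelihood of $S_i$ alone under topology $t$ with parameters $\Theta_i$. Because the sites in distinct partitions evolve independently given the topology and their respective parameters, the total log-likelihood of the concatenation under $t$ decomposes additively as
\begin{equation*}
\ell^*(t,\Theta_1,\ldots,\Theta_p) = \sum_{i=1}^p \ell_i(t,\Theta_i).
\end{equation*}

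The key step is that, since the parameter vectors $\Theta_1,\ldots,\Theta_p$ range over disjoint and unconstrained coordinates, the maximization over the joint parameter space separates term by term:
\begin{equation*}
\sup_{\Theta_1,\ldots,\Theta_p} \sum_{i=1}^p \ell_i(t,\Theta_i) = \sum_{i=1}^p \sup_{\Theta_i} \ell_i(t,\Theta_i) =: \sum_{i=1}^p \ell_i^*(t).
\end{equation*}
In words, the fully partitioned ML score of $t$ on the concatenation is exactly the sum of the individual single-dataset ML scores of $t$ on each $S_i$. Stating the identity with $\sup$ rather than $\max$ lets the argument go through even if the per-partition optima are not attained in the interior of the parameter space.

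Next I would invoke the hypothesis. For every partition $i$, the topology $t$ is an optimal ML solution for $S_i$, so $\ell_i^*(t)\geq \ell_i^*(t')$ for every competing topology $t'$ on $S$. Summing these inequalities over $i$ and applying the decomposition above yields
\begin{equation*}
\ell^*(t) = \sum_{i=1}^p \ell_i^*(t) \geq \sum_{i=1}^p \ell_i^*(t') = \ell^*(t'),
\end{equation*}
which says precisely that the fully partitioned score of $t$ dominates that of any $t'$. Hence $t$ is an optimal solution for the fully partitioned analysis, as claimed.

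I expect the only genuine subtlety to be justifying the separation of the maximization, i.e.\ that no numerical parameters are linked across partitions. This is exactly the content of the phrase \emph{fully partitioned} (as opposed to unpartitioned or partially linked analyses), so once that definition is made explicit the remainder is bookkeeping; the only care needed is to phrase the statement in terms of suprema so as not to assume attainment of the optimizing branch lengths.
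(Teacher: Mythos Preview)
Your proof is correct and is precisely the natural argument: the fully partitioned log-likelihood decomposes as a sum over partitions with disjoint parameter sets, so the supremum separates term-by-term, and summing the per-partition optimality inequalities gives the result. Note, however, that the paper does not actually supply its own proof of this lemma; it is quoted verbatim as Lemma~2 from \cite{WSB} and invoked as a black box, so there is no in-paper argument to compare against.
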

	
	\begin{cor}
		\label{corollary-to-lemma2}
		The set of newly computed gene trees computed during a WSB* pipeline 
		has the same distribution as the original set of ML gene trees
		obtained from the split-saturated genes.
	\end{cor}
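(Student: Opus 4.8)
The plan is to show that, once WSB* has discarded the non-saturated loci, the remaining binning-plus-weighting procedure acts as the identity on the empirical distribution of gene tree topologies. I would proceed in three steps, combining the two preceding lemmas with the replication rule of weighted statistical binning.

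First, I would pin down exactly which genes reach the summary method and how they are grouped. By Lemma~\ref{lem:key2}(a), the only loci passed on by WSB* are the split-saturated ones; by Lemma~\ref{lem:bins}, these are partitioned into at most three bins, one per quartet topology, with the bin for a split $t$ containing precisely the split-saturated genes whose supported split is $t$. Thus no topology information is lost or mixed at the binning stage.

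Second, I would compute the supergene tree on each bin. Fixing the bin for $t$, every constituent locus is split-saturated for $t$, so by Lemma~\ref{lem:key2}(b) the topology $t$ is the \emph{unique} ML optimizer of each such locus. Lemma~\ref{lemma2} then yields that $t$ is an optimal solution of the fully partitioned ML analysis on the concatenation; and since strict locus-wise optimality is preserved under summation (any competing topology $t'$ is strictly worse on each block, hence strictly worse in total), $t$ is in fact the \emph{unique} partitioned optimizer. Hence the supergene tree for the bin of $t$ is exactly $t$, with no tie and no randomness.

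Finally, I would account for the weighting. In a weighted statistical binning pipeline each supergene tree is replicated once per gene in its bin, so the topology $t$ appears among the newly computed gene trees with multiplicity equal to the number of split-saturated genes supporting $t$. Summing over the three splits shows that the multiset of newly computed gene trees has the same cardinality and the same per-topology counts as the multiset of ML gene trees of the split-saturated genes, giving the claimed equality of distributions. The main obstacle I anticipate is precisely the gap between Lemma~\ref{lemma2}, which only guarantees that $t$ is \emph{an} optimizer, and what the corollary needs, namely that the supergene tree is \emph{exactly} $t$; I would close this gap by invoking the strict optimality of Lemma~\ref{lem:key2}(b) to rule out ties in the concatenated problem, so that the replication faithfully reproduces the original topology counts.
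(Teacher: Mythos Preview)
Your proposal is correct and follows essentially the same route as the paper: invoke Lemma~\ref{lem:bins} for the binning structure, Lemma~\ref{lemma2} for the supergene tree on each bin, and then the replication rule of WSB to match counts. Your treatment is in fact slightly more careful than the paper's, which simply asserts that partitioned ML ``produces the tree topology associated with the bin'' without explicitly addressing ties; your use of the strict locus-wise optimality from Lemma~\ref{lem:key2}(b) to force uniqueness of the concatenated optimizer is a welcome refinement.
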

	\begin{proof}
		By Lemma \ref{lem:bins},
		the split-saturated
		genes are partitioned into three
		bins for the different tree topologies.
		By Lemma \ref{lemma2}, 
		fully
		partitioned maximum likelihood
		on each supergene alignment produces the
		tree topology associated
		with the bin.
		In a WSB* pipeline, the supergene tree for each bin is copied
		by as many genes as in the bin.
		Hence, the distribution defined by the newly
		computed gene trees is identical to the
		distribution defined by  
		original ML gene trees.
	\end{proof}
	
	The rest of the argument follows as in the proof of Claim~\ref{claim:reas}.
	By Lemma~\ref{lem:key2} (a), under our four-taxon model species tree with topology $ab|cd$, 
	the most probable estimated quartet tree on split-saturated genes is $ac|bd$.
	After removing all the loci that are not split-saturated, we are left only 
	with genes that split $2/2$. As the number of loci increases, with probability going to $1$
	the most frequent estimated quartet tree will be $ac|bd$.
	Therefore by Corollary~\ref{corollary-to-lemma2}, in a WSB* pipeline with bootstrap
	support threshold $B \geq 1 - \frac{2}{3}\left(\frac{1}{L}\right)^L$, 
	the most frequent supergene tree computed
	by weighted statistical binning is identical to the most frequent
	estimated quartet tree in the input,
	and will converge to $ac|bd$ as the number of loci increases by the law of large numbers.
	Hence, WSB* pipelines 
	followed by reasonable summary methods will be 
	positively misleading under this model.
\end{proof}

\subsection*{Extension to MSC+CFN model}

In this section, we extend the main claims to the MSC+CFN model. The key idea is to choose a species tree that is highly likely to produce, on any given locus, sequence data whose distribution is close to that of a fixed gene tree in the Felsenstein zone.

When a character of length $L$, $\chi_{\cdot j}$, is generated under the CFN model on $(\tree,\Lambda)$,
we write $\chi_{\cdot j} \sim \mathcal{D}_g^L[\tree,\Lambda]$.
Formally, $\mathcal{D}_g^L[\tree,\Lambda]$ is a probability distribution over sequence datasets in$\{0,1\}^{n\times L}$, that is, containing $n$ sequences of length $L$ taking values in $\{0,1\}$, where
$n$ is the number of leaves in $\tree$. The subscript $g$ is meant to refer to the fact that this is a distribution obtained from a single gene tree.

We also consider sequence datasets generated by the MSC+CFN model.
Consider a species tree 
$(\stree,\Gamma,\theta)$
with 
$n$ leaves. Each gene $j = 1,\ldots, m$ 
has a genealogical history 
represented by its gene tree $\tree_j$
distributed
according to the following process: 
looking backwards in time, 
on each branch $e$ of the species tree, 
the coalescence of any two lineages 
is exponentially distributed with rate $2/\theta_e$, 
independently from all other pairs; 
whenever two branches merge in the species tree, 
we also merge the lineages of the corresponding populations, that is, the coalescence proceeds on the \emph{union} of the lineages. 
More specifically, the probability density of a realization of this model for $m$ independent genes is
\begin{align*}
&\prod_{j=1}^m \prod_{e\in E} 
\exp\left(-\binom{O_j^{e}}{2} 
\left[\gamma_j^{e, O_j^{e}+1} - \gamma_j^{e, O_j^{e}}\right]\frac{2}{\theta_e}\right)\\
&\qquad\qquad \times\prod_{\ell=1}^{I_j^{e}-O_j^{e}}
\exp\left(-\binom{\ell}{2} 
\left[\gamma_j^{e, \ell} - \gamma_j^{e, \ell-1}\right]\frac{2}{\theta_e}\right),
\end{align*}
where, for gene $j$ and branch $e$, 
$I_j^{e}$ is the number of lineages entering $e$,
$O_j^{e}$ is the number of lineages exiting $e$, and
$\gamma_j^{e,\ell}$ is the $\ell^{th}$ coalescence time in $e$;
for convenience, we let $\gamma_j^{e,0}$ and
$\gamma_j^{e,I_j^{e}-O_j^{e}+1}$ be respectively
the divergence times of $e$ and of its parent population (which depend on $\Gamma$). 

When a character of length $L$, $\chi_{\cdot j}$, is generated under the MSC+CFN model on $(\stree,\Gamma,\theta)$,
we write $\chi_{\cdot j} \sim \mathcal{D}_s^L[\stree,\Gamma,\theta]$.
Formally, $\mathcal{D}_s^L[\stree,\Gamma,\theta]$ is a probability distribution over sequence datasets in $\{0,1\}^{n\times L}$, where
$n$ is the number of leaves in $\stree$. The subscript $s$ is meant to refer to the fact that this is a distribution obtained from the MSC on a species tree.

As in the main text, fix $\tree^0$ to be the four-taxon topology $ab|cd$ on $\{a,b,c,d\}$ and let $\Lambda^0$ denote a vector of branch lengths on $\tree^0$. Denote the endpoint of the middle edge on the $ab$ side as $e$, and on the $cd$ side as $f$. For this tree, denote the length of branch $ae$ as $\lambda^0_a$, $be$ as $\lambda^0_b$, $cf$ as $\lambda^0_c$, $df$ as $\lambda^0_d$ and $ef$ as $\lambda^0_m$.
For a branch length $\lambda$, recall that we also
use the parametrization $\phi 
= -\frac{1}{2} \log \lambda$ in terms of which
the probability of a change along this branch is
\begin{equation*}
p=\frac{1}{2}\left(1-e^{-2\lambda}\right)=\frac{1}{2}(1-\phi),
\end{equation*}
and the probability of no change is
$q=\frac{1}{2}(1+\phi).$
We choose $\Lambda^0$ to construct a Felsenstein zone tree where, for a parameter $\rho > 0$, $p^0_a=p^0_c=\rho$ and $p^0_b=p^0_d=p^0_m=\rho^3$. Note that for any $\rho>0$, we can set $\lambda^0_a=\lambda^0_c=-\frac{1}{2}\log (1-2\rho)$ and $\lambda^0_b=\lambda^0_d=\lambda^0_m=-\frac{1}{2}\log (1-2\rho^3)$ to satisfy this relationship. We also denote the alternate topologies
by $\tree^*=ac|bd$ and $\tree^1=ad|bc$.
\begin{claim}[Species tree in the Felsenstein zone]\label{claim:species-fels}
	For all $\epsilon > 0$, there is a species tree $(\stree^0,\Gamma^0,\theta^0)$
	with leaves $\{a,b,c,d\}$ and a probability distribution $\mathcal{R}$ over $\{0,1\}^{4\times L}$ such that
	\begin{align*}
	\mathcal{D}_s^L[\stree^0,\Gamma^0,\theta^0]
	= 
	(1-\epsilon)\, \mathcal{D}_g^L[\tree^0,\Lambda^0]
	+ \epsilon\, \mathcal{R}.
	\end{align*}
\end{claim}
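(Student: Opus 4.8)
The plan is to reduce the claimed mixture identity to a pointwise domination between probability mass functions on the finite set $\{0,1\}^{4\times L}$, and then to realize that domination by taking a species tree whose coalescent is nearly degenerate. Concretely, the identity $\mathcal{D}_s^L[\stree^0,\Gamma^0,\theta^0] = (1-\epsilon)\,\mathcal{D}_g^L[\tree^0,\Lambda^0] + \epsilon\,\mathcal{R}$ for some probability distribution $\mathcal{R}$ holds if and only if $\mathcal{D}_s^L[\stree^0,\Gamma^0,\theta^0](x) \geq (1-\epsilon)\,\mathcal{D}_g^L[\tree^0,\Lambda^0](x)$ for every $x$: given such domination, one simply sets $\mathcal{R} := \epsilon^{-1}\big(\mathcal{D}_s^L[\stree^0,\Gamma^0,\theta^0] - (1-\epsilon)\,\mathcal{D}_g^L[\tree^0,\Lambda^0]\big)$, which is nonnegative by assumption and has total mass $1$ because both terms are probability distributions. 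So it suffices to exhibit a species tree realizing this domination.

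First I would fix the species tree topology to be $ab|cd$ and choose its divergence times together with per-branch mutation rates so that, in the degenerate limit where all population-size parameters tend to $0$ (equivalently, the coalescence rates $2/\theta_e \to \infty$ while divergence times and mutation rates are held fixed), the random gene tree converges almost surely to the fixed gene tree $(\tree^0,\Lambda^0)$. This matching is possible because in that limit each pair of sister lineages coalesces exactly at its divergence time, so the five induced substitution branch lengths of the gene tree become explicit linear expressions in the divergence times and mutation rates; there are five targets $\lambda^0_a,\lambda^0_b,\lambda^0_c,\lambda^0_d,\lambda^0_m$ and more than enough free parameters to hit them. Crucially, sending $\theta_e \to 0$ inflates the coalescence rates without altering the substitution lengths, which decouples the two effects we need to control.

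Next I would make the domination quantitative through two ingredients. The continuity ingredient: since $\{0,1\}^{4\times L}$ is finite, every pattern has $\mathcal{D}_g^L[\tree^0,\Lambda^0](x)>0$, and $\lambda \mapsto \mathcal{D}_g^L[\tree^0,\lambda](x)$ is continuous, there is an open neighborhood $N$ of $\Lambda^0$ such that $\mathcal{D}_g^L[\tree^0,\lambda](x) \geq (1-\tfrac{\epsilon}{2})\,\mathcal{D}_g^L[\tree^0,\Lambda^0](x)$ for all $\lambda \in N$ and all $x$ simultaneously. The concentration ingredient: as $\theta \to 0$, the probability that each sister pair coalesces within its own population tends to $1$ (the failure probability decays like $e^{-c/\theta}$), forcing the gene tree topology to be $ab|cd$, while the excess coalescence times, being $\mathrm{Exp}(2/\theta_e)$, concentrate at $0$, so the gene-tree branch lengths land in $N$. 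Thus for $\theta^0$ small enough the ``good'' event $G = \{\text{topology } ab|cd\} \cap \{\text{branch lengths} \in N\}$ has probability at least $1-\tfrac{\epsilon}{2}$.

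Finally I would assemble these. Writing $\mathcal{D}_s^L[\stree^0,\Gamma^0,\theta^0](x)$ as the coalescent average of $\mathcal{D}_g^L[T](x)$ over gene trees $T$ and discarding all contributions outside $G$ gives
\[
\mathcal{D}_s^L[\stree^0,\Gamma^0,\theta^0](x) \ \geq\ \prob[G]\,\big(1-\tfrac{\epsilon}{2}\big)\,\mathcal{D}_g^L[\tree^0,\Lambda^0](x) \ \geq\ \big(1-\tfrac{\epsilon}{2}\big)^2\,\mathcal{D}_g^L[\tree^0,\Lambda^0](x) \ \geq\ (1-\epsilon)\,\mathcal{D}_g^L[\tree^0,\Lambda^0](x),
\]
which is exactly the required pointwise domination, and the claim follows. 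The main obstacle is that total-variation closeness of $\mathcal{D}_s^L[\stree^0,\Gamma^0,\theta^0]$ to $\mathcal{D}_g^L[\tree^0,\Lambda^0]$ (which the degenerate limit supplies easily) is not enough: the exact mixture decomposition demands a one-sided, pattern-by-pattern inequality that must survive even for the rare, substitution-heavy patterns whose probabilities are polynomially small in $\rho$. Handling those uniformly is precisely what the finiteness-plus-continuity neighborhood $N$ buys us, and coupling it with the coalescent concentration estimate is the crux of the argument.
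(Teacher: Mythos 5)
Your proof is correct, but it reaches the mixture decomposition by a genuinely different mechanism than the paper. The species-tree construction itself is the same in spirit (topology $ab|cd$, divergence times and per-branch mutation rates matched to $\Lambda^0$, population-size parameters driven to $0$ so the coalescent degenerates onto the target gene tree). Where you diverge is in how the exact identity $\mathcal{D}_s^L = (1-\epsilon)\mathcal{D}_g^L[\tree^0,\Lambda^0] + \epsilon\mathcal{R}$ is extracted. You first reduce it, correctly, to the pointwise domination $\mathcal{D}_s^L(x) \geq (1-\epsilon)\mathcal{D}_g^L[\tree^0,\Lambda^0](x)$, and then obtain that domination from (i) positivity of every pattern probability on the finite space $\{0,1\}^{4\times L}$ plus continuity in the branch lengths, and (ii) concentration of the coalescent. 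The paper instead conditions on a single ``good'' event $G$ chosen so that the conditional law given $G$ is \emph{exactly} $\mathcal{D}_g^L[\tree^0,\Lambda^0]$, not merely close to it: the event requires both that all coalescences occur within a slack $\alpha$ of the tops of $e$, $f$, $r$ \emph{and} that no mutation occurs in those slack windows; by independence of the Poisson mutation process on disjoint intervals, the effective CFN branch lengths on $G$ are then exactly $\lambda^0_a,\ldots,\lambda^0_m$ (the internal species-tree branches are padded to $\alpha + \lambda^0_m/2$ precisely so the slack cancels), and $\mathcal{D}_s^L = \P[G]\,\mathcal{D}_g^L[\tree^0,\Lambda^0] + \P[G^c]\,\mathcal{D}(\cdot\mid G^c)$ gives the mixture at once. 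The paper's trick is shorter and needs no continuity or positivity input; your route is more standard and robust, and it makes explicit the point (which you rightly flag) that total-variation closeness alone would not yield the one-sided decomposition --- but it does lean on every dataset having strictly positive probability under $(\tree^0,\Lambda^0)$, which holds here only because $\rho>0$. Your parameter-matching step (``more than enough free parameters'') is stated loosely; the paper's explicit assignment $\gamma^0_a=\lambda^0_a$, etc., with $\gamma^0_e=\gamma^0_f=\alpha+\lambda^0_m/2$, is the concrete instantiation you would need to write down.
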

\begin{proof}
	We let $\stree^0$ be the balanced species tree with split $ab|cd$ and root $r$.
	Denote the endpoint of the edge incident to the root on the $ab$ side as $e$, and on the $cd$ side as $f$. 
	For this tree, denote the length of branch $ae$ as $\gamma^0_a$, $be$ as $\gamma^0_b$, $cf$ as $\gamma^0_c$, $df$ as $\gamma^0_d$, $er$ as $\gamma^0_e$ and $fr$ as $\gamma^0_f$.
	And similarly for $\theta^0$. The branch $r\infty$ above the root $r$ has infinite length and parameter $\theta^0_r$. We take $\theta^0_{a} = \theta^0_b = \theta^0_c
	= \theta^0_d = 1$, $\gamma^0_a = \lambda^0_a$, $\gamma^0_b = \lambda^0_b$, $\gamma^0_c = \lambda^0_c$, $\gamma^0_d = \lambda^0_d$. Finally we let
	$\gamma^0_e = \gamma^0_f = \alpha + \lambda^0_m/2$ and $\theta^0_e = \theta^0_f = \theta^0_r = \beta$. Take $\alpha$ and $\beta$ small enough that:
	\begin{itemize}
		\item coalescences in $er$, $fr$ and $r\infty$ occur within $\alpha$ of $e$ $f$, and $r$ respectively;
		
		\item no mutation occurs within $\alpha$ above $e$, $f$ and $r$ respectively;
	\end{itemize}
	with probability at least $1-\epsilon$. Conditioned on the event above, the distribution of sequence dataset is precisely
	$\mathcal{D}_g^L[\tree^0,\Lambda^0]$.
	The result follows.
\end{proof}

We are now ready to prove the main theorems.
\begin{proof}[Proof of Theorem~\ref{thm:incPart}]
	We take $(\stree^0,\Gamma^0,\theta^0)$
	as in Claim~\ref{claim:species-fels}
	for $\epsilon > 0$ to be determined below.
	We think of the first $m$ loci as divided into two subsets: $\mathcal{M}^m_0$ coming from distribution  $\mathcal{D}_g^L[\tree^0,\Lambda^0]$
	and $\mathcal{M}^m_{\mathcal{R}}$ coming from $\mathcal{R}$. By the law of large numbers, we have
	\begin{align*}
	\frac{|\mathcal{M}^m_0|}{m} \to 1-\epsilon
	\qquad\text{and}\qquad  
	\frac{|\mathcal{M}^m_{\mathcal{R}}|}{m} \to \epsilon.
	\end{align*}
	We then apply the argument in the proof of Claim~\ref{claim:partML} to the samples in $\mathcal{M}^m_0$ and take 
	$\epsilon$ small enough that the contribution of $\mathcal{M}^m_{\mathcal{R}}$ 
	to the partitioned log-likelihood 
	is in the limit $m \to +\infty$ smaller
	than the expected gap between $\tree^*$ and $\tree^0$.
\end{proof}

The proofs of Theorems~\ref{thm:incReas} and~\ref{thm:incWSB*} follow from similar arguments.

\end{document}